\title{On Minimal Achievable Quotas in Multiwinner Voting}
\DeclareMathOperator{\Q}{\mathbb{Q}}
\DeclareMathOperator{\N}{\mathbb{N}}
\DeclareMathOperator{\R}{\mathbb{R}}
\newtheorem{theorem}{Theorem}[section]
\newtheorem{lemma}[theorem]{Lemma}
\newtheorem{proposition}[theorem]{Proposition}
\theoremstyle{definition}
\newtheorem{definition}[theorem]{Definition}
\newtheorem{example}[theorem]{Example}
\author{Patrick Becker \and Fabian Frank
\affiliations{
Technical University of Munich
}}
\begin{document}
\maketitle 

\begin{abstract}
    Justified representation (JR) and extended justified representation (EJR) are well-established proportionality axioms in approval-based multiwinner voting. 
    Both axioms are always satisfiable, but they rely on a fixed quota (typically Hare or Droop), with the Droop quota being the smallest one that guarantees existence across all instances. 
    With this in mind, we take a step beyond the fixed-quota paradigm by studying instance-dependent proportionality notions. 
    More specifically, we minimize the quota requirements for JR and EJR using the parameter $\alpha$.
    We demonstrate that all commonly studied voting rules can have an additive gap to the optimum of $\frac{k^2}{(k+1)^2}$. 
    Moreover, we examine the computational aspects of our instance-dependent quota and prove that determining the optimal value of $\alpha$ for a given approval profile that allows some committee to satisfy $\alpha$-JR is NP-complete. 
    To address this, we introduce an integer linear programming (ILP) formulation for computing committees that satisfy $\alpha$-JR, and we provide positive computational results in the voter interval (VI) and candidate interval (CI) domains.
\end{abstract}

\section{Introduction}
In the context of multiwinner approval voting, fairness is typically understood in terms of proportionality: the elected committee should reflect, to an appropriate extent, the fraction of voters who share similar approval sets. 
This guarantees that sufficiently large groups of voters receive adequate representation, while preventing smaller groups from being disregarded. 
Proportionality axioms, like justified representation (JR) and extended justified representation (EJR), formalize this intuition and have become a central concept in the study of fairness in multiwinner elections. 
Given the size of the voter set $n$ and committee size $k$, a group of voters $S$ of size at least $n/k$ that agrees on a common candidate is called \emph{cohesive} and should be able to demand some form of representation in the committee. 
The Hare quota ($=n/k$) is deeply ingrained in the definitions of most proportionality axioms in multiwinner voting, including JR and EJR, as well as proportional justified representation (PJR) and full justified representation (FJR). 

Nevertheless, there are various situations where these proportionality axioms fail to capture intuitive fairness. 
JR demands that every cohesive group $S$ has at least one voter $v \in S$ who approves a candidate in the committee. 
Consider, for example, the election instance provided in \Cref{fig:jr_drawbacks_simple_example}.
For the group $S$ to be cohesive, it must contain at least $5$ voters. 
It should be easy to see that the committee $W = \{c_3,c_4\}$ satisfies JR, and, thereby, disregards the groups $\{1,2,3,4\}$ and $\{7,8,9,10\}$.
These groups fall just short of meeting the quota.
This instance, however, permits a more balanced representation of all voters by selecting the committee $W' = \{c_1,c_2\}$. 

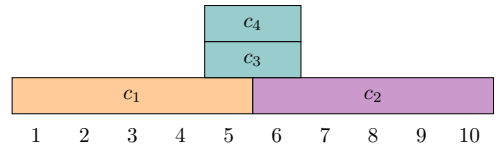
\begin{figure}[!ht]
    \centering
        \scalebox{0.8}{
    \begin{tikzpicture}[yscale=0.6,xscale=0.8, voter/.style={anchor=south}] 
        \foreach \i in {1,...,10}
            \node[voter] at (\i-0.5, -1) {$\i$};
        
        \draw[fill=orange!40] (0, 0) rectangle (5, 1);      
        \draw[fill=violet!40] (5, 0) rectangle (10, 1);     
        \draw[fill=teal!40] (4, 1) rectangle (6, 2);
        \draw[fill=teal!40] (4, 2) rectangle (6, 3);
        
        \node at ( 2.5, 0.5) {$c_{1}$};
        \node at ( 7.5, 0.5) {$c_{2}$};
        \node at ( 5, 1.5) {$c_{3}$};
        \node at ( 5, 2.5) {$c_{4}$};

    \end{tikzpicture}}   
    \caption{
    A voting instance with 10 voters, candidate set $\{c_1, c_2, c_3,c_4\}$ and $k=2$ in which any committee containing $c_3$ satisfies JR. 
    Voters are denoted by an integer and approve the candidates stacked above them.
    }
    \label{fig:jr_drawbacks_simple_example}
\end{figure}

This phenomenon extends to all fairness axioms defined in relation to these static quota requirements. 
Even the core is not immune to this issue.\footnote{For a formal definition, consider the paper by \citet{peters2021market}.}
\Cref{fig:core_drawbacks_simple_example} illustrates an example in which the committee $\{c_1,c_2,c_3\}$ belongs to the core, even though voters $5$–$12$ remain unrepresented.
The three voters supporting $c_4$ once again fall short of the quota and can therefore not form a valid deviating coalition.

\begin{figure}[!ht]
    \centering
        \scalebox{0.8}{
    \begin{tikzpicture}[yscale=0.6,xscale=0.8, voter/.style={anchor=south}] 
        \foreach \i in {1,...,12}
            \node[voter] at (\i-0.5, -1) {$\i$};
        
        \draw[fill=orange!40] (0, 0) rectangle (4, 1);
        \draw[fill=orange!40] (0, 1) rectangle (4, 2); 
        \draw[fill=orange!40] (0, 2) rectangle (4, 3); 
        
        \draw[fill=violet!40] (4, 0) rectangle (7, 1);     
        \draw[fill=teal!40] (7, 0) rectangle (8, 1);

        \draw[fill=violet!40] (8, 0) rectangle (11, 1);     
        \draw[fill=teal!40] (11, 0) rectangle (12, 1);
        
        \node at ( 2, 0.5) {$c_{1}$};
        \node at ( 2, 1.5) {$c_{2}$};
        \node at ( 2, 2.5) {$c_{3}$};

        \node at ( 5.5, 0.5) {$c_{4}$};
        \node at ( 7.5, 0.5) {$c_{5}$};
        \node at ( 9.5, 0.5) {$c_{6}$};
        \node at ( 11.5, 0.5) {$c_{7}$};
        
    \end{tikzpicture}}   
    \caption{A voting instance with 12 voters, candidate set $\{c_1, \dots, c_7\}$ and $k=3$ in which the committee $W = \{c_1, c_2, c_3\}$ lies within the core even though a large group of voters remains unrepresented (example taken from \citep{peters2021market}).}
    \label{fig:core_drawbacks_simple_example}
\end{figure}

These observations regarding the limitations of proportionality are well established in the literature: \\ 
\begin{quote}
    "Properties like the core (and also EJR, PJR, and JR) prevent specific pathological situations, but beyond their definitions, do not provide intuitive justifications for why a committee they allow should be selected" --- \citet{peters2021market}.
\end{quote}

For some proportionality axioms, representation cannot be guaranteed for groups smaller than the Droop quota ($=\frac{n}{k+1}$) \citep{droop1881methods}. 
Hence, fixed quotas may yield examples like that in \Cref{fig:jr_drawbacks_simple_example} or render the axiom unsatisfiable. 
This raises the question of whether, instead of relying on a uniform quota across all instances, one should adapt it with respect to the preference profile.
Our experiments show that the quota values for which JR or EJR remain satisfiable tend to be much smaller than $\frac{n}{k+1}$.
A similar observation has been made for the ordinal case \citep{bardal2025proportional}. 

This paper analyzes variable quotas, contributing to and expanding upon the existing literature \citep{janson2018thresholds, bardal2025proportional}. 
The central idea is to reduce the quota $n/k$ by a factor $\alpha$, approaching the infimum value at which JR or EJR ceases to be feasible. 
In the case of JR, finding the optimal $\alpha$-value corresponds to minimizing the size of the largest group of unrepresented voters who approve a common candidate.

To illustrate this concept more concretely, consider one more example: 
A company has four locations, each comprising five members. 
The task is to form a representative committee of three members. 
Within each location, all members approve a location representative who is aware of the problems at that location. 
Each location has a manager, a salesperson, and an engineer, with each role group approving a representative for their role. 
The remaining two members per location are working students, who only approve their location representative. 
Known approval-based multiwinner rules, such as PAV or CC, elect three of the four location representatives because they receive the broadest approval support. 
However, this outcome fails to represent an entire location. 
We argue that a more proportionally fair outcome — and the optimal committee with respect to $\alpha$-JR — would instead include the three group representatives, ensuring that no large and homogeneous segment of the electorate, such as an entire company location, remains unrepresented. 

\paragraph{Our Contributions}
In this paper, we study a class of proportionality axioms, denoted $\alpha$-JR and $\alpha$-EJR, which generalize the existing proportionality axioms JR and EJR. 
There are instances for which existing voting rules perform very badly with respect to these axioms. 
More specifically, we demonstrate that all commonly studied voting rules can produce committees with an additive gap to the optimal $\alpha$-value of $\frac{k^2}{(k+1)^2}$. 
Additionally, we show that computing the optimal $\alpha$-value for a voting instance is NP-hard, even for JR. 
To find the optimal value for JR, we formulate an ILP. 
Further, in the VI and CI domains, we can efficiently compute the optimal committee with respect to $\alpha$-JR. 
For party-list profiles, we provide an algorithm that constructs an optimal $\alpha$-EJR committee in polynomial time, corresponding to the d'Hondt method in the apportionment setting. 
Lastly, we provide empirical evaluation on the distribution of the optimal $\alpha$-value for synthetically generated instances and show that existing voting rules can, to some extent, approximate the optimal $\alpha$-value on average.

\begin{table*}[!ht]
    \centering
    \renewcommand{\arraystretch}{1.4}
    \resizebox{\textwidth}{!}{%
    \begin{tabular}{c|c|c|c}
        Setting & $\alpha$-JR & $\alpha$-EJR & $\alpha$-EJR+ \\
        \hline
        Computing $\alpha_{\Phi}^*(I)$ 
            & NP-complete (Thm \ref{theorem:JR_hardness})
            & NP-hard (Thm \ref{theorem:JR_hardness})
            & NP-complete (Thm \ref{theorem:JR_hardness})\\

        Computing $\alpha_{\Phi}^W(I)$ 
            & $O(nm)$ (Thm \ref{proposition:JR_complexity_given_committee}) 
            & coNP-hard (Thm \ref{theorem:EJR_hardness_given_committee}) 
            & $O(mnk)$ (Thm \ref{proposition:ejrplus_complexity_with_committee}) \\
        \hline
        \multicolumn{4}{c}{\textbf{Restricted domains (computing $\alpha_{\Phi}^*(I)$)}} \\
        \hline
        Party-list profiles 
            & $O(\lvert P\rvert)$ (Thm \ref{theorem:party_domain_ejr}) 
            & $O(k \log(\lvert P\rvert)  + \lvert P\rvert)$ (Thm \ref{theorem:party_domain_ejr})
            & $O(k \log(\lvert P\rvert)+ \lvert P\rvert)$ (Thm \ref{theorem:party_domain_ejr})\\
        VI domain 
            & $O(n^2 m \log n)$ (Thm \ref{theorem:vi_domain_jr}) & \textbf{?} & \textbf{?} \\
        CI domain 
            & $O(m^2n \log n)$ (Thm \ref{theorem:ci_domain_jr}) & \textbf{?} & \textbf{?} \\
    \end{tabular}
    }
    \caption{Overview of our computational results. 
    Cells with a question mark indicate that the question is still open.}
    \label{table:overview}
\end{table*}

\section{Related Work}
Our work builds on the growing literature on fairness and proportionality in approval-based multiwinner voting (see, e.g., \citep{lackner2023multi, ABC+16a, sanchez2017proportional, skowron2017proportional}). 
The central idea in this line of research is that voters who agree on a set of candidates should receive influence in the elected committee proportional to their size. 
The notions of JR and EJR, introduced by \citet{ABC+16a}, translate this into the setting of multiwinner approval voting, thereby marking the starting point of proportionality in approval-based multiwinner voting. 
Building on these axioms, subsequent work has extended proportionality axioms and analyzed their implications for prominent rules \citep{BrPe23a, brill2024phragmen}. 
These ideas have also been generalized beyond multiwinner elections to participatory budgeting \citep{PPS21a}, and general voting models with feasibility constraints \citep{masavrik2024generalised}. 

One aspect that has received relatively little attention is the role of quotas, i.e., the requirement that a group of like-minded voters must reach a certain threshold to secure representation in the committee. 
The most common quota used in the literature is the Hare quota ($= \frac{n}{k}$).
For this, it has been shown that JR and EJR are always satisfiable \citep{ABC+16a}. 
This also holds for some stronger versions of the axioms, like EJR+ and FJR. 
More recently, these representation axioms have been considered for the Droop quota ($=\frac{n}{k+1}$) \citep{casey2025justified}. 
In this paper, the authors demonstrate that proportionality axioms based on the Droop quota can be satisfied for any given voting instance. 
They show that many voting rules and concepts originally developed for the Hare quota can be effectively adapted to the Droop quota setting. 
Before this work, the Droop quota was primarily studied within the framework of apportionment \citep{pukelsheim2017proportional, Brill_2020}.

In this work, we go a step further and consider dynamic quotas. 
\citet{droop1881methods} — after whom the quota is named — already observed that elections often result in situations where more voters than the quota support a common candidate, raising the question of how to allocate the “excess” votes that such candidates receive. 
That being said, we take a quantitative perspective of proportionality, aiming to measure the extent to which a given committee satisfies proportional representation. 
Similar quantitative approaches have been explored in prior work \citep{bardal2025proportional, lackner2019quantitative, janson2018thresholds, janeczko2022complexity}.

The work closest to ours is that of \citet{janson2018thresholds}, who studied and surveyed thresholds of proportionality. 
More specifically, he asked for the smallest proportion of voters that can be guaranteed a good outcome, where a good outcome corresponds to committees satisfying various proportionality notions. 
While he mostly focused on adjusting the proportion of votes achievable by sequential Phragmén and Thiele rules, we consider additional voting rules and examine computational approaches and results related to minimizing the quota. \\
More recently, \citet{bardal2025proportional} developed quantitative measures of proportionality in the ordinal setting, thereby strengthening the proportionality guarantees of so-called \emph{solid coalitions}. 
Their work primarily focused on empirical analyses of real-world data, whereas our study centers on the theoretical aspects of optimizing the $\alpha$-value in the approval setting and on exploring results in the voter- and candidate-interval domains. 
In parallel to us, they study the parameter $\alpha$, which for values less than $1$ makes \emph{proportionality for solid coalitions} more demanding. 
\citet{SLB+17a} formalized proportionality in the ranking setting when voters have approval preferences and introduced a measure of group satisfaction within this setting.

Alternative approaches to capturing fairness in multiwinner voting have been explored beyond JR–based axioms \citep{brill2024completing, kehne2025robustcommitteevotingrepresentation, lackner2020utilitarian}. 
\citet{brill2025individual} introduced the axiom of \emph{individual representation (IR)}, which strengthens the earlier concept of \emph{semi-strong justified representation} proposed by \citet{ABC+16a}. 
While \emph{semi-strong JR} requires that every sufficiently large and cohesive group of voters is represented in the committee at least once, IR refines this principle by assigning each voter a guarantee relative to the largest $\ell$-cohesive group she is able to form for a given profile. 
IR is not always satisfiable: there exist profiles for which no committee provides individual representation, and non-trivial approximation guarantees are unattainable without imposing restrictions on the domain of admissible profiles. 
\citet{kehne2025robustcommitteevotingrepresentation} shifted the focus on fairness to be more candidate-centric. 
Their desideratum of fairness states that candidates with similar sets of supporters should receive similar representation.

Lastly, we mention the \emph{proportionality degree (PD)} as an alternative for quantifying proportionality in multiwinner approval voting. 
This approach strictly enforces the quota while allowing greater flexibility in determining each group's average satisfaction level. 
\citet{skowron2021proportionality} used this to measure the proportionality guarantees that different voting rules can achieve. 
He showed that the sequential variant of the proportional approval voting rule has higher PD than the sequential Phragmén rule. 
\citet{janeczko2022complexity} looked into computational aspects of the proportionality degree in committee elections. 
They show that deciding if a committee with a given PD exists is NP-hard and that verifying if a given committee provides a given PD is coNP-complete.

\section{Preliminaries}\label{section:preliminaries} 
We consider the standard approval-based multiwinner voting framework. 
For $x \in \N$, let $[x] = \{1, 2, \dots, x\}$. 
Let $N =\{v_1, \dots, v_n\}$ be the set of voters, $C = \{c_1, \dots, c_m\}$ be the set of candidates, and $k \in \N$ be the committee size with $k \leq m$. 
For every voter $v \in N$, we denote the set of candidates she approves of as  $A_v \subseteq C$, and we write $N_c = \{ v \in N: c \in A_v \}$ for the set of supporters of a candidate $c \in  C$. 
A voting instance $I$ is defined by an approval profile $A = (A_v)_{v \in N}$ and the committee size $k$, i.e., $I = (A, k)$.
Finally, a committee $W$ is a subset of the candidates with cardinality $k$.
To determine which committee to select, various multiwinner voting rules have been proposed. 
In this paper, we consider commonly used approval-based multiwinner voting rules, including the method of equal shares (MES), the sequential Phragmén (seq-Phragmén), the Chamberlin-Courant (CC), proportional approval voting (PAV), and the greedy justified candidate rule (GJCR). 
We refer the reader to the full version of the paper for the formal definitions of these rules.

\section{Quota-Based Strengthening of Proportionality} \label{section:quota_definitions}
As stated in the beginning, most proportionality axioms rely on the principle that a group is entitled to representation only if it satisfies some cohesiveness condition. 
Cohesiveness in its general form demands that a voter group satisfy the Hare quota \citep{ABC+16a, sanchez2017proportional}.

\begin{definition}[$\ell$-Cohesiveness \citep{ABC+16a}]
    For $\ell \in [k]$, a group $S \subseteq N$ is $\ell$-cohesive if (1) $\lvert S \rvert \geq \ell \cdot \frac{n}{k}$ and (2) $\lvert \bigcap_{v \in S}A_v \rvert \geq \ell$.
\end{definition}

In this work, we study a more general version of this definition, which incorporates a parameter $\alpha$ to make the quota requirement dynamic. 
The parameter $\alpha$ acts as a scaling factor that adjusts the group size required for representation. 
Smaller $\alpha$ values correspond to a more demanding standard of proportionality, as even smaller groups can claim representation.

\begin{definition}[$(\alpha, \ell)$-Cohesiveness] \label{definition:alpha_ell_cohesiveness}
    For $\ell \in [k]$ and $\alpha \in \R_{\geq0}$, a group $S \subseteq N$ is $(\alpha, \ell)$-cohesive if (1) $\lvert S \rvert \geq \alpha \cdot \ell \cdot  \frac{n}{k}$ and (2) $\lvert \bigcap_{v \in S}A_v \rvert \geq \ell$.
\end{definition}

For $S = \emptyset$, it holds that $\bigcap_{v \in S} A_v = C$.
Changing the quota size has, up until now, mostly been done for $\alpha\geq 1$, which relaxes the cohesiveness definition \citep{DBW+24a, DFPS25a, jiang2020approximately, do2022onlineapproval, halpern2023representation}.

With this generalized cohesiveness definition, we modify existing proportionality axioms:

\begin{definition}[$\alpha$-JR]
    A rule satisfies $\alpha$-JR if for every $(\alpha, 1)$-cohesive group $S$, there exists a voter $v \in S$ with $\lvert A_v \cap W \rvert \geq 1$.
\end{definition}

\begin{definition}[$\alpha$-EJR]
    A rule satisfies $\alpha$-EJR if for every $(\alpha, \ell)$-cohesive group $S$, there exists a voter $v \in S$ with $\lvert A_v \cap W \rvert \geq \ell$.
\end{definition}

\begin{definition}[$\alpha$-EJR+]
    A committee $W$ satisfies $\alpha$-EJR+ if there does not exist a candidate $c \in C \setminus W$, a group of voters $S \subseteq N$, and $\ell \in [k]$ with $\vert S \vert \geq \alpha \cdot \ell \cdot \frac{n}{k}$ such that
    \begin{equation*}
        c \in \bigcap_{v \in S} A_v \text{ and } \vert A_v \cap W \vert < \ell \text{ for all } v \in S.
    \end{equation*}
\end{definition}

Clearly, for the case $\alpha = 1$, the three proportionality axioms correspond to the original notions of JR, EJR, and EJR+. 
Therefore, several voting rules are known to guarantee satisfiability. 
More recently, it has been established that for any given instance, there exists an $\alpha < 1$ such that $\alpha$-EJR+ can be satisfied.

\begin{theorem}[\citet{casey2025justified}] \label{theorem:droop_existence}
    For every instance $I$, and any $\alpha$ with $\alpha > \frac{k}{k+1}$ there exists a committee $W$ that satisfies $\alpha$-EJR+ and is polynomial-time computable.
\end{theorem}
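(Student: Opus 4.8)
The plan is to construct the promised committee with a Droop-quota variant of the greedy justified candidate rule (GJCR). The starting observation is the one that makes the constant $\tfrac{k}{k+1}$ appear: for every $\ell\in\N_+$ and every $\alpha>\tfrac{k}{k+1}$ we have $\alpha\cdot\ell\cdot\tfrac{n}{k}>\ell\cdot\tfrac{n}{k+1}$. Consequently, it suffices to produce, in polynomial time, a single committee $W$ that admits no triple $(c,S,\ell)$ with $c\in C\setminus W$, $\ell\in\N_+$, $|S|>\ell\cdot\tfrac{n}{k+1}$, $c\in\bigcap_{v\in S}A_v$, and $|A_v\cap W|<\ell$ for all $v\in S$; call such a triple a \emph{Droop-violation}. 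Any committee without a Droop-violation simultaneously satisfies $\alpha$-EJR+ for all $\alpha>\tfrac{k}{k+1}$, since a hypothetical $\alpha$-EJR+ violation would have $|S|\ge\alpha\cdot\ell\cdot\tfrac{n}{k}>\ell\cdot\tfrac{n}{k+1}$ and meet exactly the other conditions, hence be a Droop-violation.

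For the construction I would run GJCR with the Droop threshold in place of the Hare threshold: start from $W=\emptyset$, and for $k$ rounds do the following. Let $\beta$ be the largest integer $\ell\ge 1$ for which some $c\in C\setminus W$ satisfies $\bigl|\{v\in N: c\in A_v,\ |A_v\cap W|<\ell\}\bigr|>\ell\cdot\tfrac{n}{k+1}$; if such an $\ell$ exists, add a corresponding candidate $c_t$ to $W$, record the level $\beta_t:=\beta$ and the witness set $S_t:=\{v\in N: c_t\in A_v,\ |A_v\cap W_{t-1}|<\beta_t\}$; otherwise add an arbitrary unused candidate and set $\beta_t:=0$. The rule terminates with $|W|=k$ and runs in polynomial time: each round only needs to inspect, for each $c\in C\setminus W$ and each $\ell\in\{1,\dots,k\}$, the single set $\{v:c\in A_v,\ |A_v\cap W|<\ell\}$.

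Correctness is where the work lies. Suppose $W$ is the output and $(c^{\ast},S^{\ast},\ell^{\ast})$ is a Droop-violation. Because $W$ only grows and $c^{\ast}\notin W$ at the end, $c^{\ast}$ is missing in every round and every $v\in S^{\ast}$ has $|A_v\cap W|<\ell^{\ast}$ throughout; hence $(c^{\ast},S^{\ast})$ is an eligible choice at level $\ell^{\ast}$ in every round, which forces $\beta_t\ge\ell^{\ast}$ for all $t\in\{1,\dots,k\}$. From here I would run a counting argument: every round ``spends'' a witness set $S_t$ with $|S_t|>\beta_t\cdot\tfrac{n}{k+1}\ge\ell^{\ast}\cdot\tfrac{n}{k+1}$, while a fixed voter $v$ can lie in $S_t$ for at most $|A_v\cap W|$ rounds — and for $v\in S^{\ast}$ at most $\ell^{\ast}-1$ rounds — since membership in $S_t$ adds a new $W$-member to $A_v$; moreover, for every round $t$ one has $S_t\cap S^{\ast}=S^{\ast}\cap N_{c_t}$, because all of $S^{\ast}$ is under-represented at level $\beta_t$ at all times. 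Balancing these estimates — crucially using the greedy choice of the \emph{maximal} level (so that the $k$ rounds are not ``wasted'' on small levels) together with the fact that $k$ rounds are pitted against the Droop denominator $k+1$ — should yield $|S^{\ast}|\le\ell^{\ast}\cdot\tfrac{n}{k+1}$, contradicting the assumption. I expect this balancing step to be the main obstacle: a purely global count ($\sum_t|S_t|\le\sum_{c\in W}|N_c|$) is far too lossy, so the argument has to be localized to the voters of $S^{\ast}$ and the candidates they approve, in the spirit of the known proof that plain GJCR satisfies EJR+ under the Hare quota \citep{BrPe23a}; and one cannot simply reduce to that Hare-quota result, since the obvious paddings (adding a committee seat together with a universally approved dummy candidate) are off by exactly one unit.
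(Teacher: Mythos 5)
First, note that the paper does not prove this statement at all: it is imported verbatim from \citet{casey2025justified}, and your high-level route --- run GJCR with the Droop threshold $\ell\cdot\tfrac{n}{k+1}$ in place of the Hare threshold, and observe that any committee free of strict Droop-violations satisfies $\alpha$-EJR+ for every $\alpha>\tfrac{k}{k+1}$ --- is exactly the route of the cited source (the appendix of this paper even names the rule ``Droop-GJCR''). Your reduction step and your algorithm are fine, and the non-increasing level sequence $\beta_1\ge\beta_2\ge\cdots$ that you get from always picking the maximal level is indeed the structural fact the correctness proof needs.

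The genuine gap is that the one step carrying all the weight --- deriving the contradiction $|S^{\ast}|\le\ell^{\ast}\cdot\tfrac{n}{k+1}$ --- is explicitly left open (``should yield'', ``I expect this balancing step to be the main obstacle''), and your instinct to localize the count to $S^{\ast}$ points in the wrong direction. The argument that closes it is a \emph{global} charging argument, just with the right weights rather than the lossy count $\sum_t|S_t|\le\sum_{c\in W}|N_c|$ you rightly reject. Concretely: if a Droop-violation $(c^{\ast},S^{\ast},\ell^{\ast})$ survived all $k$ rounds, then a hypothetical round $k+1$ would also find a violation, so all $k+1$ rounds add a candidate; charge each added candidate cost $1$, split evenly over its witness set, so each $v\in S_t$ pays $1/|S_t|<\tfrac{k+1}{\beta_t n}$. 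If $v$ pays in rounds $t_1<\cdots<t_j$, then the $j-1$ earlier payments each put a distinct member of $A_v$ into $W$, and $|A_v\cap W_{t_j-1}|<\beta_{t_j}$ forces $j\le\beta_{t_j}$; combined with $\beta_{t_i}\ge\beta_{t_j}$ this gives $v$'s total payment strictly below $j\cdot\tfrac{k+1}{\beta_{t_j}n}\le\tfrac{k+1}{n}$. Summing over voters, the total payment is strictly less than $k+1$, yet it equals the number of candidate-adding rounds, which is $k+1$ --- a contradiction. Without this (or an equivalent) charging computation, your proposal establishes the setup but not the theorem.
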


In this paper, we are interested in the minimal achievable quota for a given instance $I$.
Therefore, we would like to find the infimum $\alpha$-value that still allows for an instance to satisfy $\alpha$-$\Phi$ where $\Phi \in \{\text{JR}, \text{EJR}, \text{EJR+}\}$.
That being said, let 
\begin{equation*}
    \alpha_{\Phi}^W(I) := \inf \{ \alpha \colon W \text{ satisfies }\alpha\text{-}\Phi \}
\end{equation*} 
for the given instance $I$.
We refer to $\alpha_{\Phi}^W(I)$ as the $\alpha$-value of a committee $W$ with respect to the proportionality axiom $\Phi$.
The minimum achievable $\alpha$-value for an instance $I$ is denoted by 
\begin{equation*}
    \alpha^*_{\Phi}(I) = \min_{W \subseteq C: \lvert W \rvert = k} \alpha^W_{\Phi}(I).
\end{equation*}

We also refer to this as the optimal $\alpha$-value for an instance $I$ with respect to $\Phi$. 
If the instance $I$ is clear from the context, we omit it and just write $\alpha_{\Phi}^*$ or $\alpha_{\Phi}^W$. 

The notion of $\alpha^*_{\Phi}$ is equivalent to finding the largest $\alpha$-value for which all committees $W$ violate $\alpha$-$\Phi$. 
Observe that by definition, the value is greater than or equal to $0$.
\footnote{$S = \emptyset$ is $(0, \ell)$-cohesive and violates $\alpha$-$\Phi$ for all possible committees $W$.}  

If all voters with a non-empty ballot approve a candidate in the committee, we have $\alpha^*_{\text{JR}}=0$, and if $\lvert A_v \cap W \rvert \geq \min(\lvert A_v \rvert, k)$ for all $v \in N$, then $\alpha^*_{\text{EJR}}=0$.
\footnote{For $\alpha > 0$, every voter group $S$ with $\lvert S \rvert \geq \alpha \ell \frac{n}{k}$ has at least one "satisfied" voter.}
Note that $\alpha \geq 0$ since the proportionality axioms are always violated when $\alpha=0$.
For $\alpha > \frac{k}{k+1}$, all proportionality axioms can be satisfied, which was shown by \citep{casey2025justified}.

\subsection{General Properties}
This approach generates a whole class of proportionality axioms. 
We, therefore, start by showing some general relational results:  
The known implications chain of EJR+, EJR, and JR only holds for $\alpha_1$-EJR+, $\alpha_2$-EJR, and $\alpha_3$-JR when $\alpha_1 \leq \alpha_2 \leq \alpha_3$. 
Note that all proofs can be found in the full version of the paper.
\begin{restatable}{proposition}{propImplications}
\label{prop:implications}
    For $\alpha_1 \leq \alpha_2 \leq \alpha_3$, it holds that 
    $\alpha_1$-EJR+ $\implies \alpha_2$-EJR $\implies \alpha_3$-JR.
\end{restatable}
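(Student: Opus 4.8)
The plan is to establish the two implications separately, in each case relying on the simple monotonicity observation that the size threshold $\alpha \cdot \ell \cdot \frac{n}{k}$ is nondecreasing in $\alpha$, so a group witnessing the weaker (larger-$\alpha$) axiom automatically satisfies the quota condition of the stronger (smaller-$\alpha$) one. Chaining the two implications then yields the statement.

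For the second implication, $\alpha_2$-EJR $\implies \alpha_3$-JR, I would argue directly. Let $W$ satisfy $\alpha_2$-EJR and let $S$ be any $(\alpha_3,1)$-cohesive group, so $\lvert S \rvert \geq \alpha_3 \cdot \frac{n}{k}$ and $\lvert \bigcap_{v \in S} A_v \rvert \geq 1$. Since $\alpha_2 \leq \alpha_3$, we get $\lvert S \rvert \geq \alpha_3 \cdot \frac{n}{k} \geq \alpha_2 \cdot \frac{n}{k}$, so $S$ is also $(\alpha_2,1)$-cohesive. Applying $\alpha_2$-EJR with $\ell = 1$ produces a voter $v \in S$ with $\lvert A_v \cap W \rvert \geq 1$, which is exactly what $\alpha_3$-JR requires. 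This step is pure bookkeeping.

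For the first implication, $\alpha_1$-EJR+ $\implies \alpha_2$-EJR, I would adapt the standard EJR+ $\Rightarrow$ EJR argument. Let $W$ satisfy $\alpha_1$-EJR+ and let $S$ be $(\alpha_2,\ell)$-cohesive; then $\lvert S \rvert \geq \alpha_2 \cdot \ell \cdot \frac{n}{k} \geq \alpha_1 \cdot \ell \cdot \frac{n}{k}$ and $\lvert \bigcap_{v \in S} A_v \rvert \geq \ell$. Suppose toward a contradiction that $\lvert A_v \cap W \rvert < \ell$ for every $v \in S$. Fix any set $T \subseteq \bigcap_{v \in S} A_v$ with $\lvert T \rvert = \ell$. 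If $T \subseteq W$, then $T \subseteq A_v \cap W$ for every $v \in S$, contradicting $\lvert A_v \cap W \rvert < \ell$; hence there exists a candidate $c \in T \setminus W \subseteq \big(\bigcap_{v \in S} A_v\big) \setminus W$. Then $c$, $S$, and $\ell$ form precisely the forbidden configuration in the definition of $\alpha_1$-EJR+, contradicting the hypothesis. Therefore some $v \in S$ has $\lvert A_v \cap W \rvert \geq \ell$, which establishes $\alpha_2$-EJR.

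The only step needing a genuine idea rather than unwinding definitions is locating a commonly approved candidate outside $W$ in the first implication, and even that is the well-known trick behind EJR+ $\Rightarrow$ EJR; the rest is monotonicity of the threshold in $\alpha$. I therefore expect no real obstacle, only the care required to track the two parameters through both arguments.
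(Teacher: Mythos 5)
Your proposal is correct and follows essentially the same route as the paper's proof: both rest on the monotonicity of the threshold $\alpha\cdot\ell\cdot\frac{n}{k}$ in $\alpha$, and both derive EJR+ $\Rightarrow$ EJR by locating a commonly approved candidate outside $W$ to form a forbidden EJR+ configuration (the paper merely factors the argument as monotonicity first, then the implication at a fixed $\alpha$, whereas you fold the two steps together). No gaps.
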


\begin{proof}[Proof Sketch]
    Let $\alpha \leq \alpha'$ and $\Phi \in \{\text{JR, EJR, EJR+}\}$.
    First, $\alpha$-$\Phi$ implies $\alpha'$-$\Phi$ since as $\alpha$ increases, the set of cohesive groups shrinks. 
    Showing the implications between $\alpha'$-EJR+, $\alpha'$-EJR, and $\alpha'$-JR then follows directly from the original definitions of the proportionality axioms.
\end{proof}

\begin{restatable}{proposition}{EJRJRincompatible}
\label{prop:NoImplications}
    For $1 \geq \alpha_1 > \alpha_2 > \alpha_3$, it holds that $\alpha_1$-EJR+, $\alpha_2$-EJR, and $\alpha_3$-JR are incomparable.
\end{restatable}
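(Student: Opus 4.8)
The plan is to establish all six pairwise non-implications between $\alpha_3$-EJR+, $\alpha_2$-EJR, and $\alpha_1$-JR by producing, for each ordered pair, a small instance together with a committee that satisfies the would-be hypothesis but violates the would-be conclusion. None of these implications can be salvaged from \Cref{prop:implications}, since there the logically stronger axiom carries the \emph{smaller} parameter, whereas here it is $\alpha_3$-EJR+ --- the strongest axiom --- that carries the \emph{largest} parameter. I would group the six separations into three families according to which gap does the work (the quota gap $\alpha_1<\alpha_2<\alpha_3$, the axiom-strength chain $\mathrm{EJR{+}}\Rightarrow\mathrm{EJR}\Rightarrow\mathrm{JR}$, or both at once), keeping $n,k,m$ as free parameters so that the occasional integrality constraint --- forcing some quantity into a half-open interval $[\alpha_i\tfrac{n}{k},\alpha_j\tfrac{n}{k})$ --- is routine.

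\emph{Quota-gap separations.} To separate $\alpha_3$-EJR+ from $\alpha_2$-EJR and from $\alpha_1$-JR, and $\alpha_2$-EJR from $\alpha_1$-JR, I would use instances where the only candidate outside the committee $W$ is a single $c$, approved exactly by a group $S$ none of whose members approves any member of $W$, while every voter outside $S$ approves a member of $W$. Since $c$ is the unique non-committee candidate and its supporters approve only $c$, the committee's status under all three axioms is governed entirely by $|S|$ versus $\alpha\cdot\tfrac{n}{k}$: it always satisfies $\alpha$-EJR+ whenever $\alpha\tfrac{n}{k}>|S|$ (no group reaches the quota for any $\ell$), and it satisfies $\alpha$-JR / $\alpha$-EJR iff $|S|<\alpha\tfrac{n}{k}$. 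Taking $|S|\in[\alpha_2\tfrac{n}{k},\alpha_3\tfrac{n}{k})$ gives a committee that is $\alpha_3$-EJR+ but violates both $\alpha_2$-EJR and $\alpha_1$-JR (via $\ell=1$); taking $|S|\in[\alpha_1\tfrac{n}{k},\alpha_2\tfrac{n}{k})$ gives one that is $\alpha_2$-EJR but violates $\alpha_1$-JR.

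\emph{Axiom-strength separations.} To separate $\alpha_1$-JR from $\alpha_2$-EJR and from $\alpha_3$-EJR+, I would build an instance around $\ell=2$: a group $S$ of at least $2\alpha_3\tfrac{n}{k}$ voters all approving $\{c,c',w_1\}$ with $c,c'\notin W$ and $w_1\in W$, plus filler voters each approving a member of $W$. Every voter is then represented once, so $\alpha_1$-JR holds (indeed $\alpha$-JR for every $\alpha\ge 0$); but $S$ is simultaneously $(\alpha_3,2)$- and $(\alpha_2,2)$-cohesive with each member having only a single representative, so $c$ witnesses a violation of $\alpha_3$-EJR+ and $S$ witnesses a violation of $\alpha_2$-EJR.

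\emph{The delicate case: $\alpha_2$-EJR does not imply $\alpha_3$-EJR+.} This is the step that needs care, because $\alpha_2$-EJR is the stronger of the two axioms on the quota axis, so one must make the weaker-quota EJR+ fail while the stronger-quota EJR still holds. Again I aim at $\ell=2$ and a candidate $c\notin W$, but now I split the supporters of $c$ into $t$ blocks $S_1,\dots,S_t$, where $S_i$ approves exactly $\{c,w_i\}$ for distinct committee members $w_1,\dots,w_t$ (so $k\ge t$), each block sized just below $2\alpha_2\tfrac{n}{k}$. Then no block --- hence no subgroup sharing two common candidates --- is $(\alpha_2,2)$-cohesive; since every voter approves at most two candidates there are no $(\alpha_2,\ell)$-cohesive groups for $\ell\ge 3$; and every voter is represented once; together these give $\alpha_2$-EJR. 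On the other hand $S=\bigcup_i S_i$ has size close to $2t\alpha_2\tfrac{n}{k}$, so choosing $t$ with $t\alpha_2>\alpha_3$, e.g. $t=\lceil\alpha_3/\alpha_2\rceil+1$, forces $|S|\ge 2\alpha_3\tfrac{n}{k}$, and $c$ with this $S$ and $\ell=2$ violates $\alpha_3$-EJR+. The main obstacle of the whole proposition is exactly this construction: one has to pick $t$, the block sizes, and $n,k$ jointly so that ``$|S_i|<2\alpha_2\tfrac{n}{k}$ for all $i$'' and ``$|S|\ge 2\alpha_3\tfrac{n}{k}$'' hold with integer block sizes (possible for $n/k$ large, since $t\alpha_2>\alpha_3$), and then check that no stray $(\alpha_2,\ell)$-cohesive group formed across blocks or among the fillers is under-represented --- after which everything is bookkeeping.
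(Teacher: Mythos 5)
Your proposal is correct and follows essentially the same route as the paper: a vacuous-quota instance (one uncovered group supporting a single outside candidate) for the three separations where the weaker axiom carries the smaller parameter, an $\ell=2$ instance where every voter is covered exactly once for the two separations out of $\alpha_1$-JR, and a fragmented-support construction for the delicate $\alpha_2$-EJR $\not\Rightarrow \alpha_3$-EJR+ case. The paper's version of that last construction is just the extreme special case of yours with singleton blocks $|S_i|=1$ (each voter approves $\{c, c_i\}$ for a distinct committee member $c_i$), which makes the ``no $(\alpha_2,2)$-cohesive group'' check immediate and avoids the integrality bookkeeping you flag.
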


We next demonstrate that there can be a substantial gap between the optimal $\alpha$-values for JR and EJR. 
This highlights that our framework offers greater flexibility in how representation is understood. 
Under the classical definitions, it is natural to focus on EJR, as it always implies JR. 
However, introducing the parameter $\alpha$ allows us to adjust the strength of representation: we can choose whether to prioritize smaller, cohesive groups or to emphasize the representation of larger ones.

\begin{restatable}{proposition}{JrEjrApart} \label{prop:additive_jr_ejr_apart}
     There is an instance $I$ such that the additive difference of $\alpha^*_{\text{JR}}(I)$ and $\alpha^*_{\text{EJR}}(I)$ is $k/(k+1)$. 
\end{restatable}

\begin{proof}[Proof Sketch]
    \begin{figure}[!ht]
        \centering
            \scalebox{0.8}{
        \begin{tikzpicture}[yscale=0.6,xscale=0.8, voter/.style={anchor=south}] 
            \node[voter] at (1-0.5, -1) {$1$};
            \node[voter] at (2-0.5, -1) {$2$};
            \node at (3.5, -0.6) {\Large $\dots$};
            \node[voter] at (6-0.5, -1) {$x-1$};
            \node[voter] at (7-0.5, -1) {$x$};
            \node[voter] at (8-0.5, -1) {$x+1$};
            \node at (8.5, -0.6) {\Large $\dots$};
            \node[voter] at (10-0.5, -1) {$x+y$};

            \draw[fill=orange!40] (0, 0) rectangle (7,1);     \node at (3.5, 1.6) {\Large $\vdots$};
            \draw[fill=violet!40] (0, 2) rectangle (7, 3);     
            \draw[fill=teal!40] (7, 0) rectangle (10, 1);
            
            \node at ( 3.5, 0.5) {$c_{1}$};
            \node at ( 3.5, 2.5) {$c_{k}$};
            \node at ( 8.5, 0.5) {$c_{k+1}$};
        \end{tikzpicture}}   
        \caption{A voting instance
        $N = [x+y]$, $C =\{c_1, \dots c_{k+1}\}$ in which the optimal $\alpha$-value for JR is $0$, whereas the optimal $\alpha$-value for EJR is $\min(\frac{x}{x+y},\frac{k\cdot y}{x + y})$. 
        }
        \label{fig:example_jr_incompatible}
    \end{figure}
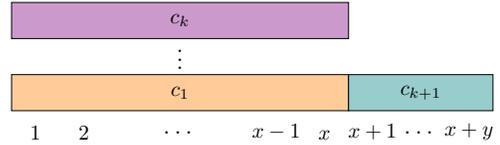
   The committee $W = \{c_1, \dots, c_{k-1}, c_{k+1}\}$ achieves $\alpha_{\text{JR}}^W(I) = 0$, while for $\alpha^*_{\text{EJR}}$ we need to decide whether to select candidates $c_1, \dots, c_k$ or to replace one of them with $c_{k+1}$.
   Either we get a violation by a $1$-cohesive group or by a $k$-cohesive group.
   When we set $x = ky$ in \Cref{fig:example_jr_incompatible}, we can show that $\alpha^*_{\text{EJR}} = k/(k+1)$.
\end{proof}

This construction shows that even when a committee perfectly satisfies $\alpha$-JR ($\alpha=0$), its $\alpha$-EJR value can be as large as $k/(k+1)$, demonstrating a maximal additive gap. 
Since $\alpha$ can be any value in $\mathbb{R}_{\geq0}$, we prove that it is sufficient to only consider $k\cdot n$ many $\alpha$-values to determine the optimum for JR and EJR. 
This technicality is important when designing efficient algorithms and is utilized in the restricted domain section.

\begin{restatable}{lemma}{numberAlphasJR}
\label{lemma:alphasToConsiderJR}
    For a given instance $I$, at most $\lceil \frac{n}{k} \rceil - 1 (\leq n)$ $\alpha$-values need to be checked to determine $\alpha^*_{\text{JR}}(I)$.
    Further, it holds that $\alpha^*_{\text{JR}}(I) \in \Q$.
\end{restatable}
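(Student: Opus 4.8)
The plan is to characterize exactly when a committee $W$ fails $\alpha$-JR and read off the finitely many threshold values of $\alpha$ that matter. Fix a committee $W$. A group $S$ witnesses a violation of $\alpha$-JR precisely when (i) every voter in $S$ approves some common candidate $c$ (so $S \subseteq N_c$ for some $c \in C$), (ii) no voter in $S$ approves anyone in $W$ (so $S$ consists only of voters $v$ with $A_v \cap W = \emptyset$), and (iii) $|S| \geq \alpha \cdot \tfrac{n}{k}$. To make the violation as easy as possible for a fixed $c$, one should take $S$ to be \emph{all} unrepresented supporters of $c$, i.e. $S_c(W) := \{v \in N_c : A_v \cap W = \emptyset\}$. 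Hence $W$ violates $\alpha$-JR iff there exists $c \in C$ with $|S_c(W)| \geq \alpha \cdot \tfrac{n}{k}$, equivalently iff $\max_{c \in C} |S_c(W)| \geq \alpha \cdot \tfrac{n}{k}$. Writing $M(W) := \max_{c \in C} |S_c(W)|$, we get $\alpha_{\text{JR}}(W) = \tfrac{k}{n} \cdot M(W)$ — the infimum over $\alpha$ such that no such group exists is exactly this value, since for $\alpha > \tfrac{k}{n} M(W)$ there is no violating group and for $\alpha \le \tfrac{k}{n}M(W)$ the group $S_{c^*}(W)$ with $c^*$ attaining the max is $(\alpha,1)$-cohesive and unrepresented. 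In particular $\alpha_{\text{JR}}(W) \in \Q$ because $M(W)$ is an integer and $n,k \in \N$, which immediately gives $\alpha^*_{\text{JR}} = \min_W \alpha_{\text{JR}}(W) \in \Q$.

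Next I would bound the number of distinct values $M(W)$ can take. Since $M(W) = \max_{c} |S_c(W)|$ and each $|S_c(W)|$ is an integer in $\{0, 1, \dots, n\}$, a priori $M(W) \in \{0,\dots,n\}$, already giving at most $n+1$ values. To sharpen this to $\lceil n/k\rceil - 1$, observe that any $W$ achieving $M(W) \geq \lceil n/k \rceil$ is weakly dominated: such $W$ already violates $1$-JR (indeed $\alpha$-JR for $\alpha = \tfrac{k}{n}\lceil n/k\rceil \geq 1$), whereas JR (the case $\alpha = 1$) is always satisfiable, so the optimal committee satisfies $M(W^*) \leq \lceil n/k \rceil - 1$. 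Therefore when searching for $\alpha^*_{\text{JR}}$ we only ever need the candidate values $\tfrac{k}{n} \cdot t$ for $t \in \{0, 1, \dots, \lceil n/k\rceil - 1\}$, which is $\lceil n/k \rceil$ values; since $M(W)=0$ (i.e. $\alpha=0$) and $M(W) = \lceil n/k\rceil - 1$ are the extreme cases, one may phrase this as "at most $\lceil n/k \rceil - 1$ nonzero values need to be checked," and trivially $\lceil n/k \rceil - 1 \le n$. I would state the counting exactly in whichever of these equivalent forms matches the paper's bookkeeping.

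The main obstacle — really the only subtle point — is being careful about the infimum versus minimum in the definition of $\alpha_{\text{JR}}(W)$ and whether the bound in the lemma statement is $\lceil n/k\rceil - 1$ or $\lceil n/k\rceil$ after the off-by-one from counting thresholds including $0$; the argument that the extra threshold $t = \lceil n/k\rceil$ never helps relies on the external fact (from \citet{ABC+16a}) that classical JR is always satisfiable, so I would make sure to invoke that explicitly. Everything else — the characterization of violations via the "take all unrepresented supporters of $c$" reduction, and the rationality of $\tfrac{k}{n}M(W)$ — is routine. One can optionally note that the same reasoning shows $\alpha^*_{\text{JR}}$ is computed by minimizing the integer quantity $M(W)$ over committees, which is the launching point for the NP-hardness result and the ILP formulation mentioned later in the paper.
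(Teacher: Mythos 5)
Your argument is correct and follows essentially the same route as the paper: characterize $\alpha_{\text{JR}}(W)$ as $\tfrac{k}{n}$ times the size of the largest unrepresented group sharing a common candidate, conclude rationality, and restrict the relevant thresholds to $\tfrac{k}{n}t$ for $t\in\{0,\dots,\lceil n/k\rceil-1\}$. If anything, you are more explicit than the paper in justifying the cap on the group size by invoking the guaranteed satisfiability of classical ($\alpha=1$) JR, a step the paper's proof only asserts implicitly.
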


\begin{proof}
    For $\alpha$-JR, it is only relevant what size the smallest $1$-cohesive group has. 
    The group size is between $1$, meaning that a single voter forms a $1$-cohesive group (together with a candidate she approves), and $\frac{n}{k}$, the size of the Hare quota. 
    The relevant $\alpha$-values come from the set $X = \{0,\frac{k}{n}, \frac{2k}{n}, \dots, \frac{rk}{n}\}$ with $r = \lceil \frac{n}{k} \rceil - 1$. 
    Observe that if there does not exist a violation for $\alpha = \frac{k}{n}$ then $\alpha^*_{\text{JR}}$ is $0$ by definition.
    
    Now let $\alpha^*_{\text{JR}}$ be the optimal value.     
    Thus, there exists some $\alpha^*_{\text{JR}}$-JR violation witnessed by some $(N',c)$.  
    Therefore, it holds that $\lvert N' \rvert \geq \alpha^*_{\text{JR}} \cdot \frac{n}{k}$. 
    It even has to be the case that this inequality is tight, since otherwise $\alpha^*_{\text{JR}}$ is clearly not optimal with $(N',c)$ also witnessing a violation for a larger $\alpha$. 
    Thus, $\alpha^*_{\text{JR}} =\lvert N' \rvert \cdot  \frac{k}{n} \in X$ which proves the claim.
\end{proof}

\begin{restatable}{lemma}{numberAlphasEJR}
\label{lemma:alphasToConsider}
    For a given instance $I$, at most $n \cdot k$ many $\alpha$-values are needed to determine $\alpha_{\text{EJR}}^*(I)$. 
    Further, it holds that $\alpha_{\text{EJR}}^*(I) \in \Q$.
\end{restatable}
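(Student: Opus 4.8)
The plan is to adapt the proof of \Cref{lemma:alphasToConsiderJR}, now also keeping track of the cohesiveness level $\ell$. For a committee $W$ and $\ell \in \{1,\dots,k\}$, let $g_\ell(W)$ be the maximum size of a group $S \subseteq N$ with $\lvert \bigcap_{v\in S} A_v \rvert \ge \ell$ and $\lvert A_v \cap W \rvert < \ell$ for every $v \in S$ (and $g_\ell(W)=0$ if no such group exists). Since every subgroup of such an $S$ again has both properties, the committee $W$ fails $\alpha$-EJR precisely when $g_\ell(W) \ge \alpha \cdot \ell \cdot \tfrac{n}{k}$ for some $\ell$; equivalently, $W$ satisfies $\alpha$-EJR iff $\alpha > g_\ell(W)\cdot\tfrac{k}{\ell n}$ for all $\ell$, so that
\begin{equation*}
    \alpha_{\text{EJR}}(W) \;=\; \max_{\ell \in \{1,\dots,k\}} \, g_\ell(W)\cdot\frac{k}{\ell\,n}.
\end{equation*}

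Since there are only finitely many committees, the minimum defining $\alpha^*_{\text{EJR}}$ is attained by some $W^*$, and by the identity above $\alpha^*_{\text{EJR}} = g_{\ell^*}(W^*)\cdot\tfrac{k}{\ell^* n}$ for a level $\ell^*$ realizing the maximum. If $g_{\ell^*}(W^*)=0$, then $\alpha^*_{\text{EJR}} = 0$; otherwise $g_{\ell^*}(W^*)$ is a positive integer, and I would bound it by invoking that classical EJR is always satisfiable (e.g.\ by PAV \citep{ABC+16a}, or via \citet{casey2025justified} together with \Cref{prop:implications}), which gives $\alpha^*_{\text{EJR}} < 1$. Hence for every $\ell$ we get $g_\ell(W^*)\cdot\tfrac{k}{\ell n} \le \alpha^*_{\text{EJR}} < 1$, i.e.\ $g_\ell(W^*) < \tfrac{\ell n}{k} \le n$, so $g_\ell(W^*) \le \lceil \tfrac{\ell n}{k}\rceil - 1 \le n-1$.

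Consequently $\alpha^*_{\text{EJR}}$ lies in the finite set
\begin{equation*}
    \{0\} \;\cup\; \Bigl\{\, s\cdot\tfrac{k}{\ell\,n} \;:\; \ell \in \{1,\dots,k\},\ 1 \le s \le \lceil\tfrac{\ell n}{k}\rceil - 1 \,\Bigr\},
\end{equation*}
which has cardinality at most $1 + \sum_{\ell=1}^{k}\bigl(\lceil\tfrac{\ell n}{k}\rceil - 1\bigr) \le 1 + k(n-1) \le n\cdot k$. Every element of this set is a ratio of two integers, hence rational, and in particular $\alpha^*_{\text{EJR}} \in \Q$.

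I expect two points to need the most care. First, the characterization $\alpha_{\text{EJR}}(W) = \max_\ell g_\ell(W)\tfrac{k}{\ell n}$ hides exactly the tightness argument inlined in the proof of \Cref{lemma:alphasToConsiderJR}: because $\alpha_{\text{EJR}}(W)$ is an infimum and the violation uses the strict inequality $\lvert A_v \cap W\rvert < \ell$, one must check that the worst violation is attained and that its threshold is genuinely this value. Second, the counting is only sharp because of the strict bound $g_\ell(W^*) < \tfrac{\ell n}{k}$; the cruder estimate $g_\ell(W^*) \le n$ would give $nk+1$ candidates. A further subtlety, absent in the JR case, is that $\ell$ formally ranges over all of $\N_+$: for $\ell > k$ the condition $\lvert A_v \cap W\rvert < \ell$ is vacuous, so $g_\ell(W)$ no longer depends on $W$ and $g_\ell(W)\tfrac{k}{\ell n}$ is non-increasing in $\ell$ there, hence all levels $\ell > k$ collapse to the single threshold attained at $\ell = k+1$, adding at most one further candidate value. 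The degenerate cases $n=1$ and $k=1$ should be handled directly (the latter reducing to \Cref{lemma:alphasToConsiderJR}).
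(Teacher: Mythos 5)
Your proposal is correct and follows essentially the same route as the paper: decompose by the cohesiveness level $\ell$, argue that the optimal value must be tight for some witnessing group, and conclude that $\alpha^*_{\text{EJR}}$ lies in the set $\bigcup_\ell \{0, \tfrac{k}{\ell n}, \dots, \tfrac{(\lceil \ell n/k\rceil -1)k}{\ell n}\}$ of at most $nk$ rationals. Your explicit handling of the tightness of $\alpha_{\text{EJR}}(W)=\max_\ell g_\ell(W)\tfrac{k}{\ell n}$ and of the levels $\ell>k$ is more careful than the paper's write-up (which tacitly restricts to $\ell\le k$), but it does not change the argument.
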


The last two lemmas allow us to assume that $\alpha \in \Q$ since for any $\alpha \in \R \setminus \Q$ we know from the previous two lemmas that there is some larger $\alpha' \in \Q$.

\subsection{Worst-Case Analysis of Existing Rules} \label{section:analysis_voting_rules}
Many known rules can always guarantee the Droop quota.
Nevertheless, there are instances in which these voting rules perform very poorly with respect to the optimal $\alpha$-value.
Intuitively, this follows from the fact that many rules greedily select candidates approved by many voters and do not consider whether underrepresented voters approve of common candidates.
\Cref{fig:visualisation_example1} visualizes such an instance for $k = 5$. 

\begin{example}  \label{example1}
    Consider the following family of instances. 
    Let $N = (k+1) (k+1)$. 
    Furthermore, let $C = B \cup D$, with $B = \{b_1, \dots, b_{k+1}\}$ such that $N_{b_i} = \{(i-1)\cdot (k+1)+1, \dots, i \cdot (k+1)\}$ for all $i \in [k+1]$ and $D = \{d_1, \dots, d_k\}$ such that $N_{d_i} = \{j \in N \colon (j \mod k+1)  = i\}$ for all $i \in [k]$.
    Then each candidate in $D$ and each candidate in $B$ is approved by $k+1$ voters (see \Cref{fig:visualisation_example1} for a visualization of the case $ k=5$).
\end{example}

\begin{figure}[!ht] 
    \centering
    \resizebox{0.85\columnwidth}{!}{
    \begin{tikzpicture}
        \tikzmath{
        \height = 0.37;
        \width = 0.37;
        \cornerpoint = 0.53;
        \numberVerticesW = 6;
        \numberVerticesWW = 5;
        \numberVerticesV = 6;
        \numberVerticesVV = 5;
        \hyperedgeLength =  2;
        \numberEdges = 2;
        }
        \foreach \v in {1,...,\numberVerticesV} {
            \foreach \w in {1,...,\numberVerticesW} {
                \node (v\v-\w) at (\w,\v) {};
            }
        }
    
        \begin{scope}[fill opacity=0.8]
        \foreach \v in {1,...,\numberVerticesVV} {
        \filldraw[fill=violet!40] ($(v\v-1)+(-\cornerpoint,0)$) 
            to[out=90,in=180] ($(v\v-1) + (0,\height)$)
            to[out=0,in=180] ($(v\v-\numberVerticesW) + (0,\height)$)
            to[out=0,in=90] ($(v\v-\numberVerticesW) + (\cornerpoint,0)$)
            to[out=270,in=0] ($(v\v-\numberVerticesW) + (0,-\height)$)
            to[out=180,in=0] ($(v\v-1) + (0,-\height)$)
            to[out=180,in=270] ($(v\v-1) + (-\cornerpoint,0)$);
            }
    
    \foreach \v in {\numberVerticesV,...,\numberVerticesV}{
        \filldraw[fill=orange!40] ($(v\v-1)+(-\cornerpoint,0)$) 
            to[out=90,in=180] ($(v\v-1) + (0,\height)$)
            to[out=0,in=180] ($(v\v-\numberVerticesW) + (0,\height)$)
            to[out=0,in=90] ($(v\v-\numberVerticesW) + (\cornerpoint,0)$)
            to[out=270,in=0] ($(v\v-\numberVerticesW) + (0,-\height)$)
            to[out=180,in=0] ($(v\v-1) + (0,-\height)$)
            to[out=180,in=270] ($(v\v-1) + (-\cornerpoint,0)$);
            }
            
        \foreach \v in {1,...,\numberVerticesWW} {
            \draw ($(v1-\v)+(0,-\cornerpoint)$) 
            to[out=0,in=270] ($(v1-\v) + (\width,0)$)
            to[out=90,in=270] ($(v\numberVerticesV-\v) + (\width,0)$)
            to[out=90,in=0] ($(v\numberVerticesV-\v) + (0,\cornerpoint)$)
            to[out=180,in=90] ($(v\numberVerticesV-\v) + (-\width,0)$)
            to[out=270,in=90] ($(v1-\v) + (-\width,0)$)
            to[out=270,in=180] ($(v1-\v) + (0,-\cornerpoint)$);
            }
    
        \end{scope}
        \foreach \v in {1,...,\numberVerticesV} {
            \foreach \w in {1,...,\numberVerticesW} {
                 \fill (v\v-\w) circle (0.1);
            }
        }
    
        \foreach \v in {1,...,\numberVerticesWW} {
            \fill (v1-\v) circle (0.1) node [below = 7 mm] {$d_{\v}$};
        }
        \foreach \w in {1,...,\numberVerticesV} {
            \fill (v\w-\numberVerticesW) circle (0.1) node [right = 7 mm ] {$b_{\w}$};
        }
    \end{tikzpicture}
    }
    \caption{Visualization of Example \ref{example1} for $k=5$. 
    Voters are indicated by dots, and hyperedges correspond to the candidates.
    A committee $W$ returned by most voting rules is indicated in violet, and the largest violation for $W$ is shown in orange.
    The committee $D = \{d_1, \dots, d_5\}$ minimizes the $\alpha$-value in this instance.}
    \label{fig:visualisation_example1}
\end{figure}
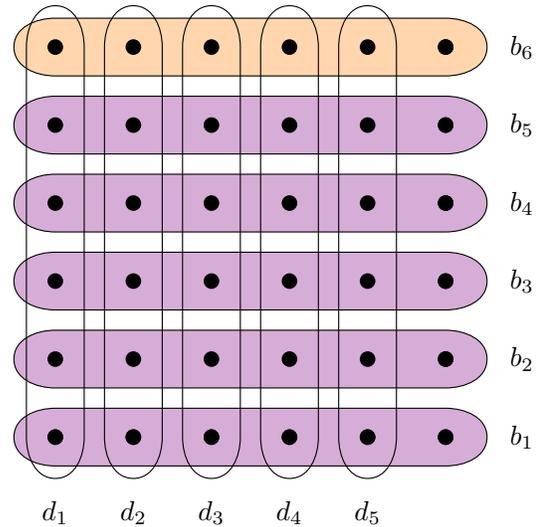

Many existing voting rules achieve an $\alpha$-value of $\frac{k}{k+1}$ in this example, since they return the committee, visualized in violet, while leaving the large group of voters $N_{b_6}$ unrepresented. 
Note that in this instance, rules like GJCR or MES would actually return no candidate, since the instance must contain a JR-violation for them to select any candidate. 
Therefore, we consider a natural way to adapt these rules to $\alpha$ by lowering the quota and giving more budget to the voters, respectively.  
We discuss these adaptations and define all the voting rules mentioned in the extended version. 
In the following, we show that all these rules have a comparably bad $\alpha$-value on the instances defined in \Cref{example1}.

\begin{theorem} \label{theorem:voting_rules_loose}
    For CC, seq-CC, seq-Phragmén, PAV, $\alpha$-MES, and $\alpha$-GJCR, the additive difference to $\alpha^*_{\text{JR}}(I)$ can be of size $\frac{k^2}{(k+1)^2}$.
\end{theorem}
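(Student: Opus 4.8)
The plan is to analyze the family of instances from \Cref{example1} and show that each of the listed rules can be forced to return a committee whose $\alpha$-value is $\frac{k}{k+1}$, while the optimal $\alpha$-value on these instances is $\frac{1}{k+1}$; the difference is $\frac{k}{k+1} - \frac{1}{k+1} = \frac{k-1}{k+1}$, so to actually hit the claimed bound $\frac{k^2}{(k+1)^2}$ I expect the construction needs a small modification (for instance, replacing the $b_i$-blocks by blocks of size $\frac{k+1}{k}$ times smaller, or adding dummy voters), so that the optimal committee $D$ leaves a residual group of exactly the right size. I would first pin down the exact numbers: in the instance as stated, committee $D=\{d_1,\dots,d_k\}$ leaves the group $N_{b_{k+1}}$ of size $k+1$ unrepresented, and since $n=(k+1)^2$ the Hare quota is $k+1$, so $\alpha_{\text{JR}}(D)=\frac{|N_{b_{k+1}}|}{n/k} = \frac{(k+1)k}{(k+1)^2} = \frac{k}{k+1}$ — which is \emph{not} better than the violet committee. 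So the first real step is to rescale: shrink each $b$-block to size $k$ (or enlarge the $d$-structure), so that $D$ achieves $\alpha^*_{\text{JR}} = \frac{k}{(k+1)^2}$ or similar, while any committee a greedy rule picks still strands a full block. I would state the corrected instance explicitly and recompute $\alpha^*_{\text{JR}}$ on it, verifying optimality by a short case analysis (any committee either omits some $b_i$, stranding its whole block, or omits some $d_j$, stranding a cohesive group of comparable size).

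The second step is the rule-by-rule worst-case argument. For CC, seq-CC, PAV, and seq-Phragmén, every candidate in $B\cup D$ has the same approval score $k+1$ (or $k$ after rescaling), so all of these rules are free, under some tie-breaking, to select the committee consisting of $b_1,\dots,b_k$ (the violet committee), leaving $N_{b_{k+1}}$ entirely unrepresented; I would verify this is genuinely a permissible output of each rule by checking that at every selection step the relevant objective (Chamberlin–Courant coverage, PAV marginal gain, Phragmén load) is tied between the chosen $b_i$ and the alternatives. For $\alpha$-MES and $\alpha$-GJCR one has to be a bit more careful since these rules only act when an $\alpha$-JR violation is present; here I would invoke the definitions given in \Cref{section:analysis_voting_rules} and argue that with the budget/quota set at the value witnessed by the \emph{largest} cohesive group, the rule is again free to buy the $b_i$'s in sequence, each supported by a fresh disjoint block, never triggering the representation of the final block. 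The key computation is that $\alpha_{\text{JR}}$ of the returned committee is $\frac{|N_{b_{k+1}}|\cdot k}{n}$, and subtracting $\alpha^*_{\text{JR}}$ gives $\frac{k^2}{(k+1)^2}$ after the rescaling is chosen correctly.

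The main obstacle I anticipate is twofold. First, getting the constants to land exactly on $\frac{k^2}{(k+1)^2}$ rather than merely $\Theta(1)$ requires choosing the block sizes so that the optimal committee strands a group of size exactly $\frac{1}{k+1}\cdot\frac{n}{k}$ worth of voters while the greedy output strands one of size $\frac{k}{k+1}\cdot\frac{n}{k}$ — this is a matter of picking $n$ divisible by the right quantities and possibly padding with isolated voters, and I would work this out carefully as the first lemma. Second, the $\alpha$-MES and $\alpha$-GJCR cases depend on the precise adaptation of these rules to the $\alpha$ parameter; I would need to confirm that "lowering the quota / giving more budget" is done in the way described in \Cref{section:analysis_voting_rules} and that, at the relevant $\alpha$, the rule's stopping condition and selection rule both permit the bad committee. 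Modulo those bookkeeping points, the argument is a clean "all scores tied, adversarial tie-breaking strands a block" construction.
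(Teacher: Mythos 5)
Your overall strategy matches the paper's: use \Cref{example1}, argue that adversarial tie-breaking lets each rule select $k$ of the $k{+}1$ candidates in $B$ (stranding a full block of $k{+}1$ voters, which forces $\alpha_{\text{JR}} = \frac{k}{k+1}$), and compare against the committee $D$. However, your central computation of $\alpha_{\text{JR}}(D)$ is wrong, and it stems from a misreading of the construction. In \Cref{example1} the sets $N_{d_i}$ are \emph{not} disjoint from the blocks $N_{b_j}$: the instance is a $(k{+}1)\times(k{+}1)$ grid in which the $b_j$ are the rows and the $d_i$ are the columns (see \Cref{fig:visualisation_example1}). Hence the committee $D=\{d_1,\dots,d_k\}$ covers $k$ of the $k{+}1$ voters in \emph{every} block $N_{b_j}$; it does not leave $N_{b_{k+1}}$ unrepresented. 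The only uncovered voters are the $k{+}1$ voters with $j \equiv 0 \pmod{k+1}$, one per row, and any two of them approve no common candidate. The largest cohesive unrepresented group under $D$ therefore has size $1$, so $\alpha_{\text{JR}}(D) = \frac{k}{n} = \frac{k}{(k+1)^2}$, and the gap is already
\begin{equation*}
\frac{k}{k+1} - \frac{k}{(k+1)^2} \;=\; \frac{k^2}{(k+1)^2}
\end{equation*}
with no modification of the instance whatsoever.

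Consequently, the rescaling you propose as your ``first lemma'' (shrinking the $b$-blocks, padding with dummies) is not only unnecessary but counterproductive: it would break the equality of approval scores between candidates in $B$ and $D$ on which your (and the paper's) tie-breaking argument for CC, seq-CC, PAV and seq-Phragm\'en relies. Once you correct the reading of $N_{d_i}$, the rest of your plan — verifying per rule that the all-$B$ committee is a permissible output, and handling $\alpha$-MES and $\alpha$-GJCR via the quota/budget adaptation from the appendix — is exactly the paper's argument. You would also still need to justify that $D$ is \emph{optimal} (the paper notes no committee covers every voter, so $\alpha^*_{\text{JR}}(I) = \frac{k}{n}$), which your case analysis sketch would cover.
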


\begin{proof}
    Consider \Cref{example1}
    \footnote{\Cref{example1} can be adapted so that it does not rely on tie-breaking.
    The additive difference for this is then $\frac{k^2}{(k+2)(k+1)}$ (see full version of the paper).}: 
    All of the above rules might choose $k$ from the $k+1$ candidates of the set $B$, i.e., the violet candidates from \Cref{fig:visualisation_example1}.
    Without loss of generality, let $N_{b_{k+1}}$ be the group of voters that remain unrepresented.
    For them not to have a legitimate claim for a candidate in the committee, the quota $\alpha\cdot\frac{n}{k}$ must be greater than $\lvert N_{b_{k+1}} \rvert $.
    Consequently,
    \begin{equation*}
        \alpha > \lvert N_{b_{k+1}}\rvert \cdot \frac{k}{n} = (k+1) \cdot \frac{k}{(k+1)(k+1)} = \frac{k}{k+1}. 
    \end{equation*}
    On the other hand, note that there exists a committee $W = D$, such that for all pairs of non-represented voters $v,v'$ with $v \neq v'$ it holds that $A_v \cap A_{v'} = \emptyset$. Thus, it holds that $\alpha^*_{\text{JR}}(I) \leq \frac{k}{n}.$
    Note that, in this instance, $D$ is an optimal solution since no committee is approved by every voter.
    It follows that the $\alpha$-value of these voting rules can be $\frac{k}{k+1} - \frac{k}{n} =  \frac{k^2}{(k+1)^2}$.
\end{proof} 

For seq-Phragmén, \citet{janson2018thresholds} has already shown that the Droop quota is the smallest achievable quota in every instance.  
We include the result here for completeness. 

\Cref{theorem:voting_rules_loose} complements the results presented by \citet{casey2025justified}.
They demonstrated that the proportionality axioms defined with the Droop quota are still attainable under current voting rules.
We show that these algorithms cannot provide any axiomatic guarantee in the sense of satisfying the more demanding $\alpha$-version of proportional representation notions when $\alpha$ is less than or equal to $\frac{k}{k+1}$ --- corresponding to the Droop quota.


\section{Computing the Optimal $\alpha$-value}

In this section, we analyze how to determine the optimal $\alpha$-value for a given instance. 
We begin by showing that, for $\alpha$-JR, potential violations can be detected efficiently for a given committee $W$.

\begin{proposition}\label{proposition:JR_complexity_given_committee}
    For a given instance $I$ and committee $W$, we can compute $\alpha_{\text{JR}}^W(I)$ in $O(nm)$.
\end{proposition}

\begin{proof}
    Given an instance $I$ and a committee $W$, we want to compute the largest $\alpha$ for which there exists an $\alpha$-JR violation. 
    For this, we consider every candidate $c \notin W$ and compute the number of voters that support $c$ but no candidate in $W$. 
    We define the maximal size of such a group as
    \begin{equation*}
        s_{\max} = \max_{c \notin W} \lvert \{v \in N\colon A_v \cap W = \emptyset \text{ and } c \in A_v \}\rvert.
    \end{equation*}
    
    We now claim that the largest $\alpha$ that causes an $\alpha$-JR violation is equal to $s_{\max} \cdot \frac{k}{n}$.
    First, observe that by the definition of $s_{\max}$, there exists a group $S$ of size $s_{\max}$ that commonly approves some candidate $c$ but no candidate in $W$. 
    For $\alpha = s_{\max}\cdot \frac{k}{n}$, $S$ is $(\alpha,1)$-cohesive since 
    \begin{equation*}
        \alpha \cdot \frac{n}{k} = s_{\max}\cdot \frac{k}{n} \cdot \frac{n}{k} = s_{\max},
    \end{equation*} 
    and thus constitutes a violation of $c$. 
    This implies that $\alpha_{\text{JR}}^W\geq s_{\max}\cdot \frac{k}{n}$.
    
    On the other hand, by definition of $\alpha_{\text{JR}}^W$ there exists some $\alpha_{\text{JR}}^W$-JR violation with witness ($N',c)$. 
    Thus, it holds that $\lvert N' \rvert \geq \alpha_{\text{JR}}^W \cdot \frac{n}{k}$. 
    Further, since the voters in $N'$ approve a common candidate $c$ and do not approve any candidate in $W$, it holds that 
    \begin{equation*}
        \lvert N' \rvert  \leq  \lvert \{v \in N\colon A_v \cap W = \emptyset \text{ and } c \in A_v \}\rvert \leq s_{\max}.
    \end{equation*}
    Thus,  $s_{\max} \geq \alpha_{\text{JR}}^W \cdot \frac{n}{k}$ which implies that $\alpha_{\text{JR}}^W \leq s_{\max}\cdot \frac{k}{n}$. \\
    Since we need to compute $s_{\max}$, we must check, for every candidate $c \notin W$, how many voters are not represented and approve $c$. 
    This can be done in $O(nm)$.
\end{proof}

\Cref{proposition:JR_complexity_given_committee} proves that, as for JR, we can determine if there exists an $\alpha$-JR violation for a given committee in polynomial time. 

\begin{restatable}{theorem}{theoremAlphaHardness}\label{theorem:JR_hardness}    
    For a constant $\alpha \in (0, 1)\cap \Q$, deciding whether there exists a committee that satisfies $\alpha$-$\Phi$ with $\Phi \in \{\text{JR, EJR, EJR+}\}$ is NP-hard.
\end{restatable}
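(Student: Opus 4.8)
The plan is to reduce from a well-known NP-complete problem, and the natural candidate is \textsc{Vertex Cover} (or equivalently the \textsc{Balanced Biclique} / \textsc{Set Cover} family), engineered so that a committee satisfying $\alpha$-JR exists if and only if the input graph has a vertex cover of a prescribed size. The core difficulty is that the threshold $\alpha \cdot \frac{n}{k}$ in the definition of $(\alpha,1)$-cohesiveness must be tuned relative to the gadget sizes so that exactly the ``uncovered'' configurations become $(\alpha,1)$-cohesive violations. First I would fix the target value $0 < \alpha < 1$ and, given a graph $G=(V,E)$ with a parameter $t$ for the desired cover size, build an instance with a block of ``edge voters'' (one group per edge $e = \{u,v\}$ who approve the two candidates $c_u, c_v$ corresponding to the endpoints) plus carefully sized blocks of ``padding voters'' and ``dummy candidates'' whose sole purpose is to calibrate $n$, $k$, and the quota $\alpha \cdot \frac{n}{k}$ so that the effective per-edge threshold is exactly the size of one edge-voter group. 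Padding candidates, each approved by a dedicated large bloc of otherwise-isolated voters, force any $\alpha$-JR committee to ``waste'' most of its $k$ slots, leaving only $t$ free slots among the $c_v$'s; then $\alpha$-JR is satisfiable precisely when those $t$ chosen vertices cover every edge, since an uncovered edge's voter group is $(\alpha,1)$-cohesive on a candidate no one in it approves in $W$.

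The key steps, in order, are: (1) describe the gadget — vertex candidates $\{c_v : v \in V\}$, edge-voter groups, a collection of padding candidates with dedicated disjoint voter blocs of a chosen size $p$, and committee size $k = (\text{number of padding candidates}) + t$; (2) choose $p$, the edge-voter group sizes $q$, and the number of padding candidates as explicit functions of $|V|, |E|, t$ and the fixed rational $\alpha$ (invoking \Cref{lemma:alphasToConsiderJR} to assume without loss of generality that $\alpha \in \Q$, so these sizes are integers after clearing denominators), arranging that $\alpha \cdot \frac{n}{k} $ lies strictly between $q-1$ and $q$ — wait, more precisely, equal to $q$ or just below, so that an uncovered edge group of size $q$ is cohesive but nothing smaller is; (3) argue soundness: if $G$ has a cover of size $t$, pick the corresponding $c_v$'s together with all padding candidates, and check that every padding bloc is represented, every edge voter approves a chosen endpoint, and no small leftover group reaches the quota; (4) argue completeness: any committee of size $k$ omitting a padding candidate leaves that padding bloc (size $p$, chosen $\geq \alpha \cdot \frac{n}{k}$) as a violation, so all padding candidates must be in $W$, leaving exactly $t$ vertex candidates, which must form a cover lest some edge-voter group of size $q \geq \alpha \cdot \frac{n}{k}$ witness an $\alpha$-JR violation. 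Finally, (5) lift the result to EJR and EJR+ by noting that with $\ell = 1$ these axioms coincide with JR on the constructed instance (every approval set has size $2$ or $1$, so no nontrivial $\ell \geq 2$ cohesion arises), or by a small modification padding the vertex-candidate approval sets.

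The main obstacle I anticipate is the arithmetic of step (2): making the quota $\alpha \cdot \frac{n}{k}$ land at the right integer is delicate because $n$ and $k$ both depend on the padding sizes, which in turn must exceed the quota — a mild fixed-point constraint. I would resolve it by first choosing $k$ and the number of padding candidates, writing $n = k \cdot M$ for a free multiplier $M$ so that $\alpha \cdot \frac{n}{k} = \alpha M$, then picking $M$ a multiple of the denominator of $\alpha$ so $\alpha M$ is an integer, setting the edge-voter group size $q := \alpha M$ and the padding bloc size $p$ to any value $\geq q$ consistent with the bookkeeping $\sum(\text{bloc sizes}) = n$, and finally adding a reservoir of isolated single-candidate voters to absorb the slack in $n$. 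A secondary point to be careful about is that lowering $\alpha$ should not accidentally make the ``yes'' instances violate the axiom via some unintended small group; keeping all non-edge voters on private candidates that are guaranteed to be in $W$ (the padding candidates) rules this out.
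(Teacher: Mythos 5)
Your proposal is correct and follows essentially the same route as the paper: a reduction from Vertex Cover in which padding candidates with dedicated voter blocs force most committee slots and calibrate the quota $\alpha\cdot\frac{n}{k}$, dummy voters absorb the arithmetic slack, and the result lifts to EJR and EJR+ because no $(\alpha,\ell)$-cohesive group with $\ell\ge 2$ can exist in the constructed instance. Your encoding (edge voters approving the two endpoint candidates directly) is the dual of, and somewhat simpler than, the paper's (vertex voters approving edge candidates plus a slot gadget $C_Y$), but the strategy and all key steps coincide.
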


NP-completeness of $\alpha$-JR and $\alpha$-EJR+ then follows with \Cref{proposition:JR_complexity_given_committee} and \Cref{proposition:ejrplus_complexity_with_committee}, respectively.
The intuition behind the reduction is as follows: We can ensure that certain candidates are selected if the support they receive is sufficiently large and every voter in that support approves only this specific candidate. 
Further, since $\alpha < 1$, we can choose the support of those candidates small enough that after picking those candidates, there remains a large number of voters unrepresented. 
Avoiding an $\alpha$-JR violation for this group is then as hard as solving a vertex cover problem. 

Since the problem of determining the optimal $\alpha$-value for an instance is computationally hard, we provide a method to determine the value by repeatedly solving the following ILP:\footnote{Note that \Cref{lemma:alphasToConsiderJR} limits the number of $\alpha$-values that need to be considered for the ILP.}

\begin{align*}
    \sum_{c \in C} x_c = k, \quad\quad\quad\quad\;\;  &\hspace{1cm}   (1)\\[6pt]
    y_v \leq \sum_{c \in A_v} x_c \quad \forall v \in N, &\hspace{1cm}   (2)
    \\[6pt]
    \sum_{v :\, c \in A_v} (1 - y_v) \;\leq \; \lceil \alpha \cdot \frac{n}{k} \rceil - 1
    \quad  \forall c \in C, &\hspace{1cm}   (3)\\
    y_v,x_c \in \{0, 1\}. \hspace{1cm}&
\end{align*}

Given some $\alpha$, the ILP finds a committee $W$ satisfying $\alpha$-JR if one exists.  The idea of the ILP is as follows. The binary variables $x_c$ indicate whether the candidate $c$ is included in the committee, and the binary variables $y_v$ can only be $1$ if voter $v$ is covered by the output committee, i.e., $v$ has an approved candidate $c$ in the committee (this means $x_c = 1$). 

Condition (1) ensures that the committee contains precisely $k$ candidates, and condition (2) guarantees that, if $y_v = 1$, the voter $v$ approves at least one candidate in the committee. 
The last condition ensures that at most $\lceil \alpha \cdot n/k\rceil-1$ voters approving some common candidate $c$ are left uncovered, capturing the $\alpha$-JR requirement. 
The constraint is equivalent to saying that at least $\lvert N_c \rvert- \lceil \alpha \cdot n/k\rceil+1$ voters in $N_c$ are covered. 
Note that the ILP does not need an objective function, as we are only interested in feasibility.

\begin{restatable}{theorem}{ILPcorrectness}
    The ILP has a solution if and only if there exists a committee that satisfies $\alpha$-JR.
\end{restatable}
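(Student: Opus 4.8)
The plan is to prove both directions of the equivalence by translating between committees satisfying $\alpha$-JR and feasible assignments of the binary variables. The key observation, already made in the paper, is that $\alpha$-JR for a committee $W$ is equivalent to the condition that for every candidate $c \in C$, the number of supporters of $c$ left uncovered by $W$ is at most $\lceil \alpha \cdot n/k\rceil - 1$ (since a group witnessing a violation must have size at least $\alpha \cdot \ell \cdot n/k = \alpha \cdot n/k$ with $\ell = 1$, and by integrality this means at least $\lceil \alpha \cdot n/k\rceil$; hence no violation means at most $\lceil \alpha \cdot n/k\rceil - 1$ uncovered supporters of any fixed candidate).

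For the ``if'' direction, I would start from a committee $W$ satisfying $\alpha$-JR and set $x_c = 1$ iff $c \in W$, and $y_v = 1$ iff $A_v \cap W \neq \emptyset$. Condition (1) holds since $|W| = k$. Condition (2) holds because whenever $y_v = 1$ there is some $c \in A_v \cap W$, so $\sum_{c \in A_v} x_c \geq 1 = y_v$; and when $y_v = 0$ the inequality is trivially satisfied. For condition (3), fix $c \in C$: the voters $v$ with $c \in A_v$ and $y_v = 0$ are exactly those supporters of $c$ with $A_v \cap W = \emptyset$, and by the reformulation of $\alpha$-JR above their number is at most $\lceil \alpha \cdot n/k\rceil - 1$, which is precisely what (3) demands.

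For the ``only if'' direction, suppose the ILP has a feasible solution $(x_c)_{c\in C}, (y_v)_{v\in N}$, and let $W = \{c : x_c = 1\}$. By (1), $|W| = k$, so $W$ is a valid committee. The only subtlety is that the $y_v$ need not coincide with the true coverage indicators: a priori a feasible solution could set $y_v = 1$ even when $v$ is covered, or — more dangerously — could it set $y_v = 0$ for a covered voter, weakening (3)? Setting $y_v = 0$ only makes the left-hand side of (3) larger, so it can never help feasibility; conversely (2) forces $y_v = 0$ whenever $v$ is \emph{not} covered by $W$. Hence for every feasible solution and every voter $v$ not covered by $W$ we have $y_v = 0$, so the set of supporters of $c$ counted on the left of (3) \emph{includes} all uncovered supporters of $c$; therefore (3) implies that the number of uncovered supporters of $c$ is at most $\lceil \alpha \cdot n/k\rceil - 1$ for every $c$. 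By the reformulation of $\alpha$-JR, $W$ satisfies $\alpha$-JR.

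The main obstacle — really the only place where care is needed — is exactly this direction: making precise that a feasible ILP solution may ``over-report'' coverage via $y_v$ but can never ``under-report'' it in a way that invalidates the argument, because (2) pins down $y_v = 0$ on uncovered voters and (3) is monotone in the $y_v$'s. Once this monotonicity point is stated cleanly, both directions reduce to the integrality reformulation of $\alpha$-JR, which follows from \Cref{prop:JRveriEff} (there $s_{\max} \cdot k/n$ is the largest $\alpha$ causing a violation, so $W$ satisfies $\alpha$-JR iff $s_{\max} < \alpha \cdot n/k$, i.e. $s_{\max} \leq \lceil \alpha \cdot n/k \rceil - 1$).
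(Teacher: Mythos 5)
Your proposal is correct and follows essentially the same route as the paper's proof: the same variable assignment ($x_c$ indicating membership, $y_v$ indicating coverage) in one direction, and in the other the observation that constraint (2) forces $y_v=0$ on uncovered voters so that constraint (3) bounds the uncovered supporters of each candidate by $\lceil \alpha \cdot n/k\rceil - 1$, which is exactly the integrality reformulation of $\alpha$-JR. The paper phrases the ILP-to-committee direction as a proof by contradiction rather than via your monotonicity remark, but the content is identical.
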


Having addressed the case of JR, we now turn our attention to EJR.
In contrast to JR, verifying or optimizing $\alpha$ for EJR is computationally more demanding. 
In fact, even determining whether a given committee satisfies EJR for $\alpha=1$ is coNP-hard \citep{ABC+16a}. 
This hardness directly extends to determining whether, for a fixed $\alpha$, an $\alpha$-EJR violation exists.

\begin{restatable}{theorem}{thmHardnessExtension} \label{theorem:EJR_hardness_given_committee}
    For a constant $\alpha \in (0,1] \cap \Q$, a  given instance $I$ and committee $W$, determining if $W$ satisfies $\alpha$-EJR is coNP-hard.
\end{restatable}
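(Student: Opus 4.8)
The idea is to reduce from the (ordinary) EJR-verification problem --- given an instance and a committee $W$, decide whether $W$ satisfies EJR --- which is exactly the case $\alpha = 1$ and is known to be coNP-complete \citep{ABC+16a}. The only structural difference between the two problems is the size of the cohesiveness threshold: $\alpha$-EJR uses $\alpha\cdot\ell\cdot\frac{n}{k}$ in place of $\ell\cdot\frac{n}{k}$. I would eliminate this discrepancy by padding the instance with inert (empty-ballot) voters so that, in the padded instance, the scaled-down $\alpha$-threshold coincides numerically with the Hare-based threshold of the original one. I will carry this out for a fixed rational $\alpha = a/b$ in lowest terms with $0 < a \le b$; the case $\alpha > 1$ is handled by a symmetric padding (or is trivial once the threshold exceeds $n$), and an irrational fixed $\alpha$ can first be replaced by the largest relevant rational breakpoint in the sense of \Cref{lemma:alphasToConsider}, which is a routine preprocessing step.

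\textbf{The construction.} Given an instance $I=(A,k)$ on $n$ voters together with a committee $W$, I first form $I_1$ by replacing every voter with $a$ identical copies. This yields $n_1 = an$ voters, leaves the candidate set and the committee size $k$ unchanged, and --- crucially --- does not change whether $W$ satisfies EJR: duplicating all voters by a common factor $a$ scales every group size and the Hare threshold by the same factor $a$ and leaves all ballots unchanged, so $\ell$-cohesive groups of $I_1$ are exactly the ``blown-up'' $\ell$-cohesive groups of $I$ and the per-voter condition $|A_v\cap W|\ge \ell$ is preserved in both directions. Next I form $I'$ from $I_1$ by adjoining $d = (b-a)\,n$ dummy voters whose approval sets are empty, so that $n' = an + (b-a)n = bn$; the committee $W$ and the candidate set stay the same. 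Since $a$ and $b$ are constants, $I'$ has size polynomial in $|I|$, and the integrality of $d$ (and of the duplication count) causes no trouble because $\gcd(a,b)=1$ already gives $\gcd(a,b-a)=1$, which is why the $a$-fold duplication suffices to make everything divisible as needed.

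\textbf{Correctness.} It remains to show that $W$ satisfies $\alpha$-EJR in $I'$ if and only if $W$ satisfies EJR in $I_1$ (equivalently, in $I$). The key computation is that, for every $\ell$, the $(\alpha,\ell)$-cohesiveness threshold in $I'$ is
\[
\alpha\cdot\ell\cdot\frac{n'}{k} \;=\; \frac{a}{b}\cdot\ell\cdot\frac{bn}{k} \;=\; \ell\cdot\frac{an}{k} \;=\; \ell\cdot\frac{n_1}{k},
\]
which is precisely the $\ell$-cohesiveness threshold in $I_1$. Moreover, every $(\alpha,\ell)$-cohesive group must have a common approved candidate (since $\ell\ge 1$), so it cannot contain a dummy voter; hence the $(\alpha,\ell)$-cohesive groups of $I'$ are exactly the $\ell$-cohesive groups of $I_1$, and on such a group the requirement that some member $v$ satisfy $|A_v\cap W|\ge\ell$ is literally the same statement in both instances. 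Therefore the two decision problems have identical yes/no answers, and coNP-hardness of $\alpha$-EJR-verification follows from that of EJR-verification \citep{ABC+16a}. (One may additionally note that the complement admits a polynomial certificate --- a violating group $S$ together with $\ell\le k$ --- so the problem is in coNP and thus coNP-complete.)

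\textbf{Main obstacle.} I expect no conceptual difficulty; the work is bookkeeping. The two points worth stating carefully are (i) that $a$-fold duplication really preserves EJR in \emph{both} directions, which requires matching blow-ups of witnessing groups, and (ii) that the padding introduces \emph{no new} cohesive groups --- this is exactly where the emptiness of the dummy ballots is used, and it is the step most easily taken for granted. Handling the out-of-range values of $\alpha$ (namely $\alpha\ge 1$ and irrational $\alpha$) cleanly is the only other thing to tie off, and both reduce to the construction above after an elementary reformulation.
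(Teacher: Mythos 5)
Your proposal is correct and follows essentially the same route as the paper: for $\alpha=a/b$, duplicate every voter $a$ times and pad with $(b-a)n$ empty-ballot voters so that the $(\alpha,\ell)$-threshold in the padded instance coincides with the Hare threshold $\ell\cdot\frac{an}{k}$ of the duplicated one, then transfer violations in both directions using that dummies cannot join any cohesive group. The paper restricts attention to this rational case and omits the $\alpha\ge 1$/irrational edge cases you mention, but the core reduction is identical.
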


For a given instance $I$ and committee $W$, figuring out the optimal $\alpha$-value with respect to EJR, i.e. finding $\alpha_{\text{EJR}}^W(I)$, can be solved by considering $\ell$-cohesive groups individually with $1 \leq \ell \leq k$: 

Namely, we can find for every $\ell$ and $W$ the size of the largest group of voters, denoted $s_{\max}(W, \ell)$, such that 
\begin{equation*}
    s_{\max}(W, \ell) = \max_{T \subseteq C, \lvert T \rvert = \ell} \lvert \{ v \in N\colon \lvert A_v \cap W \rvert < \ell \text{ and } T \subseteq A_v \} \rvert.
\end{equation*}

Similar to JR, we can now directly determine the largest $\alpha_{\ell}$ such that there exists an $(\alpha,\ell)$-cohesive group that causes an $\alpha$-EJR violation. 
Now for every $\ell$, we compute $\alpha_{\ell} = s_{\max}(W,\ell) \cdot \frac{k}{\ell \cdot n}$ and take the maximum over these values. 
This gives the largest $\alpha$-value such that there exists an $\alpha$-EJR violation. The reason that this does not give an efficient algorithm lies in the fact that determining $s_{\max}(W,\ell)$ is computationally hard for larger $\ell$. 
This computational complexity does not carry over to $\alpha_{\text{EJR+}}^W(I)$.

\begin{restatable}{proposition}{ejrPlusAlphaFinding} \label{proposition:ejrplus_complexity_with_committee}
  For a given instance $I$ and committee $W$, we can compute $\alpha^W_{\text{EJR+}}(I)$ in $O(mnk)$. 
\end{restatable}

\section{Restricted Domains}
Since it is computationally hard to compute $\alpha^*_{\Phi}(I)$ for general instances $I$ and $\Phi \in \{\text{JR, EJR, EJR+}\}$, we look into some restricted domains in which we can compute the optimal $\alpha$-value in polynomial time.
These domains are well-studied restrictions to the general multiwinner voting setting. 
Preferences are aligned along a single dimension, such as voters and candidates positioned on a political spectrum, geographic proximity in local elections, or a preference for candidates differing along a skill or topic dimension. 

\subsection{Party-List Profiles}
A commonly studied restricted domain is that of \emph{party-list profiles}, in which voters’ preferences depend solely on candidates' party affiliations.

\begin{definition}[\citet{lackner2023multi}]
    We say that an approval profile $A$ is a \emph{party-list profile} if for every pair of voters $v_i, v_j \in N$ we have either $A_{v_i} = A_{v_j}$ or $A_{v_i} \cap A_{v_j} = \emptyset$. 
    An election instance $I$ is called a \emph{party-list instance} if (1) $A$ is a party-list profile, and (2) for each voter $v \in N$, it holds that $|A_v| \geq k$.
\end{definition}

For $\alpha$-JR, an optimal committee can be obtained by ordering the parties according to their size and iteratively selecting one candidate from each party in this order until $\lvert W  \rvert = k$. 
If every party has received a candidate, the remaining seats may be filled arbitrarily. 
We can extend this positive result to $\alpha$-EJR.

\begin{restatable}{theorem}{partydomainAlgorithm} 
\label{theorem:party_domain_ejr}
    Computing $\alpha^*_{\text{EJR+}}(I)$ for party-list instances can be done in $O(k \log(\lvert P\rvert) + \lvert P \rvert)$ with $P$ being the set of parties.
    Further, we can compute $\alpha^*_{\text{JR}}(I)$ in $O(\lvert P \rvert)$.
\end{restatable}

In the proof, we use the d'Hondt method from the apportionment setting \citep{d1882systeme} and show that it constructs an optimal committee with regard to $\alpha^*_{\text{EJR+}}(I)$.

\subsection{Voter-Interval Domain}
\begin{definition}[\citet{elkind2015structure}]
    Given an election instance $I$, we say $I$ has \emph{voter-interval (VI)} preferences if there exists a linear order $\sqsubset$ over $N$ such that for all voters $v_1, v_2, v_3 \in N$ and for each candidate $c \in A_{v_1} \cap A_{v_3}$, we have that $v_1 \sqsubset v_2 \sqsubset v_3 \implies c \in A_{v_2}$.
\end{definition}

\citet{ABC+16a} showed that checking whether $W$ provides EJR for an instance $I$ is coNP-complete.
However, to the best of our knowledge, there is no result on efficiently finding violations in $\ell$-cohesive groups when we consider only restricted approval profiles. 
More specifically, we prove the following lemma, which allows us to efficiently check for an $\alpha$-EJR violation given an instance $I$ and a committee $W$.

\begin{restatable}{theorem}{viFindingEJRViolation} \label{lemma:vi_cohesive_group}
    Given an instance $I$ in the VI domain, we can verify whether a committee $W$ satisfies $\alpha$-EJR in $O(n^3 k)$.
\end{restatable}

Furthermore, we show that we can compute the optimal value $\alpha_{\text{JR}}^*(I)$ for a given instance $I$ in the VI domain.
This shows that, by restricting voters' preferences, we can obtain a positive result in contrast to the hardness result shown in \Cref{theorem:JR_hardness} for general instances. 

\begin{restatable}{theorem}{viDomainJR} \label{theorem:vi_domain_jr}
    Computing $\alpha^*_{\text{JR}}(I)$ in the VI domain can be done in $O(n^2 m \log n)$.
\end{restatable}

The greedy algorithm in \Cref{alg:vi_domain} constructs a subset of candidates that satisfies $\alpha$-JR for a given $I$ and $\alpha$, which is cardinal-minimal.\\
By checking if the size of the returned committee is at most $k$, we can verify if, for this $\alpha$-value, we obtain a witness (feasible committee) satisfying $\alpha$-JR.
Due to \Cref{lemma:alphasToConsiderJR}, the number of distinct $\alpha$-values we need to consider is upper-bounded by $n$.
We can go through the $\alpha$-values using binary search to find the highest value such that \Cref{alg:vi_domain} returns a committee that has more than $k$ candidates, as this shows that there does not exist a committee that can satisfy $\alpha$-JR for the $\alpha$-value given in this step.

In \Cref{alg:vi_domain}, we use the observation that in the VI domain, every candidate $c$ has a left-most and right-most supporting voter according to the VI order. 
We denote these voters by $l_c$ and $r_c$, respectively. 
For $i \in [n]$, the algorithm considers the set $\{v_1, \dots v_i\}$ and fixes an $\alpha$-JR violation, if existent, with a candidate approved by the right-most voter.
Intuitively, this ensures that the violation to the left of $v_i$ is resolved while covering as many voters as possible to the right, thereby postponing the next potential violation as long as possible.

\begin{algorithm}
    \caption{Finding Optimal Committees in the VI Domain}
    \label{alg:vi_domain}
\begin{algorithmic}
    \State \textbf{Input:} Election $I = (A, k)$ in the VI domain, $\alpha \in \R_{\geq 0}$
    \State \textbf{Output:} Committee $W \subseteq C$ 
    \State $W \gets \emptyset$
    \For{$i = 1, \dots, n$}
        \State Check whether voters $\{v_1, \dots, v_i\}$ admit an $\alpha$-JR 
        \Statex \hspace{1.4em} violation under $W$ 
        \If{a violation is found}
            \State $W \gets W \cup \{c\}$ with $c \in A_{v_i}$ and
            \Statex \hspace{2.6em} $r_{c'} \sqsubseteq r_{c}$ for all $c' \in A_{v_i}$.
        \EndIf
    \EndFor
    \State \Return $W$
\end{algorithmic}
\end{algorithm}

\subsection{Candidate-Interval Domain}
\begin{definition}[\citet{elkind2015structure}]
    Given an election instance $I$, we say that $I$ has \emph{candidate-interval preferences (CI)} if there exists a linear order $\sqsubset$ over $C$ such that for each voter $v \in N$ and all candidates $a,c \in A_v$, $b \in C$ we have that $a \sqsubset b \sqsubset c \implies b \in A_v$. 
\end{definition}

Intuitively, each voter approves a consistent interval of candidates. It turns out that the positive results for the VI domain also hold in the CI domain. 
More specifically, we show that we can find the optimal $\alpha$-value for a given instance $I$ and committee $W$ with respect to EJR.

\begin{restatable}{theorem}{ciFindingEJRViolation} \label{lemma:ci_cohesive_group}
    Given an instance $I$ in the CI domain, we can verify whether a committee $W$ satisfies $\alpha$-EJR in $O(m^2kn)$.
\end{restatable}

Again, as for the VI domain, it is possible to efficiently compute a committee that satisfies $\alpha$-JR for the optimal value of $\alpha$.

\begin{restatable}{theorem}{ciDomainJR} \label{theorem:ci_domain_jr}
     Computing $\alpha^*_{\text{JR}}(I)$ in the CI domain can be done in $O(m^2n \log n)$.
\end{restatable}

We again have at most $n$ $\alpha$-values to check, and for each $\alpha$-value, we use the greedy algorithm from \Cref{alg:ci_domain} to find a minimal committee satisfying $\alpha$-JR. 
The size of the returned committee then indicates the satisfiability of a given $\alpha$-value. 
Here, we iterate over the candidates in the linear order $\sqsubset$ given by the CI domain, removing candidates that can be omitted without $\alpha$-JR violation. 
Intuitively, the algorithm has to keep a candidate if a set of voters with a relatively "early" interval has a violation, and later candidates cannot make up for this violation. 
This results in the algorithm constructing an optimal committee that pushes the candidates to the right regarding the linear order $\sqsubset$ over the candidates.

\begin{algorithm}
    \caption{Finding Optimal Committees in the CI Domain}
    \label{alg:ci_domain}
\begin{algorithmic}
    \State \textbf{Input:} Election $I = (A, k)$ in the CI domain, parameter $\alpha \in \mathbb{R}_{\geq0}$
    \State \textbf{Output:} Committee $W \subseteq C$
    \State $W \gets C$
    \For{$i = 1, \dots, m$}
        \If{$W\setminus \{c_i\}$ satisfies $\alpha$-JR}
            \State $W \gets W\setminus \{c_i\}$
        \EndIf
    \EndFor
    \State \Return $W$
\end{algorithmic}
\end{algorithm}

For both the CI and VI domains, we leave open the question of whether a polynomial-time algorithm exists to compute the optimal $\alpha$-value for EJR, $\alpha_{\text{EJR}}^*(I)$, for a given instance $I$.
The same holds for EJR+.
The proposed left-to-right sweeping methodology does not work when $\ell>1$.

\section{Empirical Evaluation} \label{section:experiments}
\begin{figure*}[t]  
    \centering
    \includegraphics[width=0.95\textwidth]{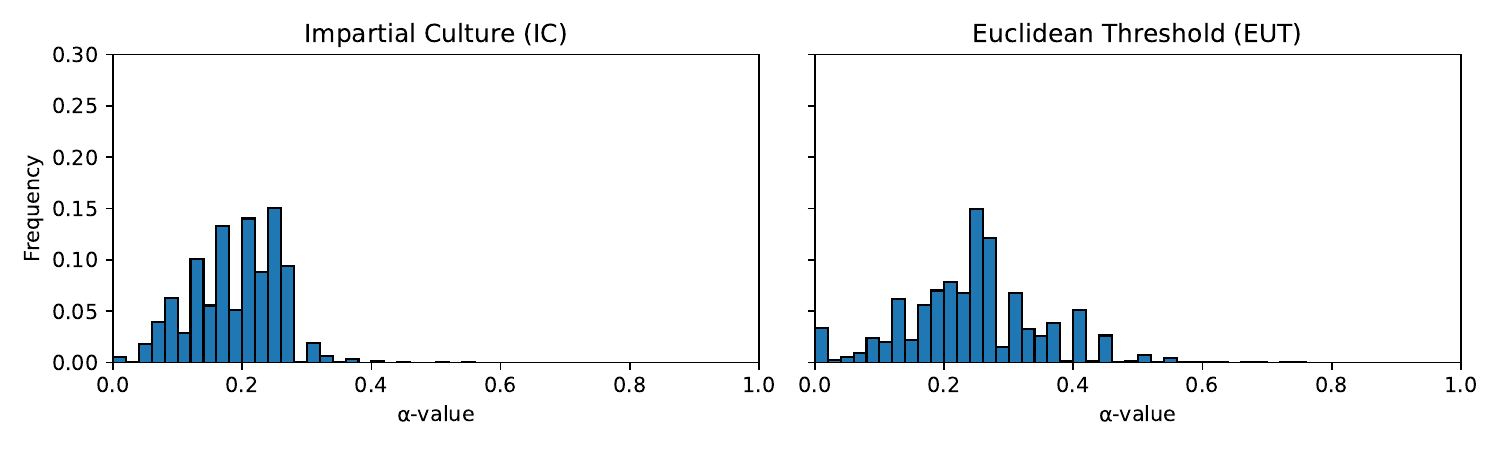}
    \caption{Distribution of optimal $\alpha$-values, $\alpha_{\text{EJR}}^*$, under the IC and the EUT model, based on 6400 generated instances each.}
    \label{fig:alpha_values_overall}
\end{figure*}
\begin{figure*}[t]
    \centering
    \includegraphics[width=0.95\textwidth]{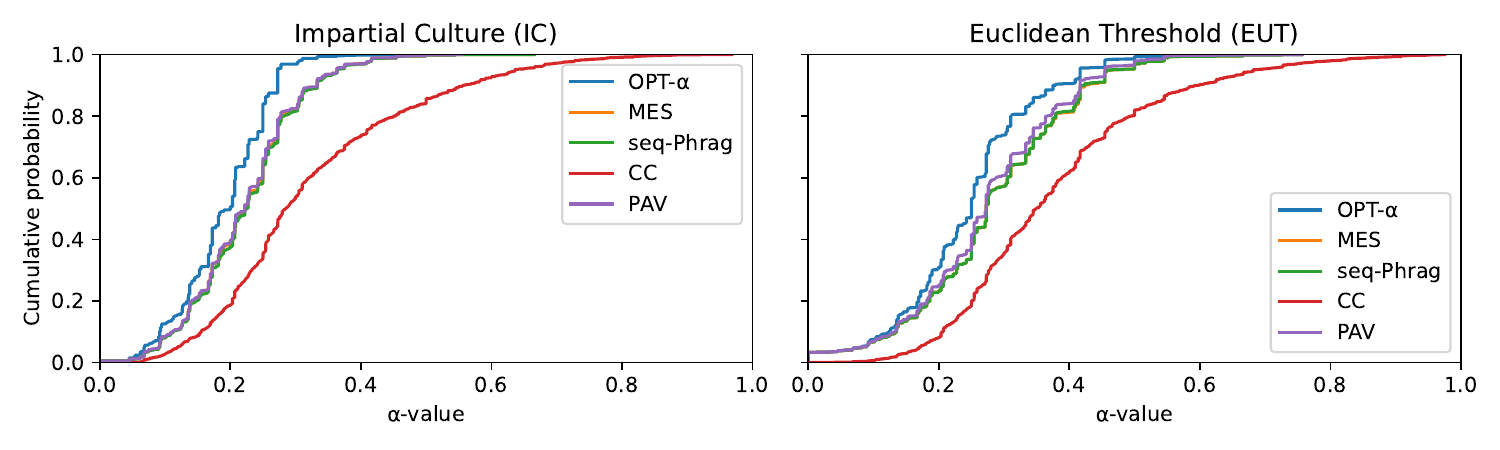}
    \caption{Cumulative distribution of the optimal $\alpha$-value, $\alpha_{\text{EJR}}^*$,  and the $\alpha$-value achieved by MES, seq-Phragmén, CC, and PAV under the IC and the EUT model, based on 6400 samples.}
    \label{fig:ci_cumulative_alpha}
\end{figure*}

Finally, we complement our theoretical results with an empirical analysis. 
We generate synthetic voting instances to evaluate the resulting values of $\alpha$ and to assess the performance of existing multi-winner voting rules. 
Our primary focus is on $\alpha$-EJR; additional plots for $\alpha$-JR are available in the full version. 
\Cref{fig:alpha_values_overall} reports the distribution of optimal values, $\alpha^*_{\text{EJR}}$, under the Impartial Culture (IC) and the Euclidean Threshold (EUT) model, two commonly used probabilistic models for generating approval profiles. 
For each model, we consider instances of varying sizes---ranging from small to moderate---with up to $60$ voters and $15$ candidates. 
Both models are widely used in approval-based multiwinner voting \citep{bredereck2019experimental,faliszewski2017multiwinner, elkind2017multiwinner}. 
Further details on the parameter configurations of them are provided in \Cref{section:appendix_experiments}.

The results demonstrate that in many instances $\alpha$ can be selected to be significantly smaller than $\frac{k}{k+1}$. In fact quite often $\alpha^*_{\text{EJR}}$ is closer to $0$ than it is to $1$. 
\citet{bardal2025proportional} show similar results when analyzing the $\alpha$-value in the ranked setting for \emph{proportionality of solid coalitions}. 

In addition to our theoretical analysis in \Cref{section:analysis_voting_rules}, we evaluate the performance of existing voting rules on randomly generated instances. 
\Cref{fig:ci_cumulative_alpha} demonstrates that MES, seq-Phragmén, and PAV typically achieve committees with comparatively low $\alpha$-values with respect to EJR, indicating that these rules perform well on average. 
Note that MES is completed with seq-Phragmén. 
Our experiments, however, also uncover instances in which all three rules yield committees whose $\alpha$-values exceed the optimal value by more than a factor of four, highlighting substantial worst-case deviations even in average-case data. 
These cases are mostly seen in larger instances. 
In \Cref{section:appendix_experiments}, we provide more fine-grained information on how the voting rules perform under different instance sizes. 
CC does not perform very well, which is unsurprising because its only goal is to satisfy as many voters as possible and because it does not satisfy EJR for all instances.  
In the full version of the paper, we show that, on the contrary, CC performs best out of the considered rules for $\alpha_{\text{JR}}^*$.

\FloatBarrier
\section{Conclusion}
We demonstrate that our dynamic extensions of JR and EJR can lead to much fairer outcomes for various instances.
Our axioms possess a clear fairness appeal, as they seek to minimize the underrepresentation of large aligned groups, which in real-world political settings may otherwise foster resentment.
However, as with many other voting rules that aim to maximize an objective, such as CC or PAV, computing the optimal $\alpha$-value is NP-hard, even for JR. 

While existing voting rules can exhibit poor worst-case behavior theoretically, they perform reasonably well on our artificially generated datasets. 
A natural direction for future work is to evaluate these rules on real-world data, for instance, using the Polkadot dataset \citep{boehmer2024approval}. 
When restricting the set of voting instances, we obtain several positive results regarding the efficient computation and verification of $\alpha$-values. 
Empirical results underscore the importance of considering adaptive quotas in multiwinner approval voting and, perhaps, in other settings. 
Further extensions could include generalizing our framework to weakly cohesive groups or developing broader notions of cohesiveness based on justified representation functions. 
Such generalizations may also prove useful in continuous settings, such as budget aggregation. 

\section*{Acknowledgments}
We want to thank our AAMAS 2026 reviewers for their valuable feedback.
Patrick and Fabian are supported by the Deutsche Forschungsgemeinschaft under grant BR 2312/14-1.

\bibliographystyle{ACM-Reference-Format}


\clearpage
\appendix

\section{Mentioned Voting Rules} \label{section:voting_rules}
In this section, we formally define the voting rules that we use in the main document.

\paragraph{Chamberlin-Courant Rule (CC):}
Selects all committees that maximize the CC-score, i.e. $\arg\max_{W \subseteq C, \lvert W \rvert = k}\text{CC}(W)$. The CC-score equals the number of voters who approve at least one candidate in the committee. Formally,
\begin{equation*}
    \text{CC}(W) = \lvert \{ v \in N : A_v \cap W \neq \emptyset \} \rvert.
\end{equation*}
Its sequential variant chooses candidates one at a time, maximizing the number of newly covered voters at each step.

\paragraph{Proportional Approval Voting (PAV):}
Select all committees that maximize the PAV-score, i.e. $\arg\max_{W \subseteq C, \lvert W \rvert = k}\text{PAV}(W)$. The PAV score is defined as
\begin{equation*}
    \text{PAV}(W) = \sum_{v \in N}H(\lvert A_v \cap W \rvert),
\end{equation*}
where $H(t) = \sum_{j=1}^t \frac{1}{j}$, is the $t$-th harmonic number, with $H(0) = 0$.

\paragraph{Sequential Phragmén (seq-Phragmén) \citep{brill2024phragmen}:}
The rule \emph{seq-Phragmén} starts with an empty committee and iteratively adds candidates,  always choosing the candidate that minimizes the (new) maximum voter load (under the assumption that previously assigned loads cannot be redistributed).
Let $\bar{x}^{(j)}$ denote the voter loads after round $j$.
At first, all voters have a load of $0$, i.e., $\bar{x}^{(0)}_i = 0$ for all $i \in N$. 
In each round, we keep the already assigned loads, but we may further increase them and give the additional load to a new candidate $c$.
In other words, we require
\begin{equation*}
    \bar{x}^{(j)}_i \geq \bar{x}^{(j-1)}_i \quad \text{for all } i,
\end{equation*}
with equality unless $i \in N_c$.
Moreover, the sum of the loads added in the round should be $1$.
We select the candidate $c$ and the loads $\bar{x}^{(j)}$ that satisfy these conditions and minimize $\max_i \bar{x}^{(j)}_i$.

\paragraph{Method of Equal Shares(MES) \citep{PeSk20a}:}
This rule starts with a budget of $k/n$ per voter and an empty committee. Adding a candidate to the committee incurs a cost of $1$. At each step, MES selects a candidate that minimizes the maximum budget any voter has to pay. A detailed description can be found in Algorithm \ref{alg:mes}. 
Each voter's budget is $b = k/n$.

\begin{algorithm}
    \caption{MES(b)}
    \label{alg:mes}
\begin{algorithmic}
    \State $W \gets \emptyset$
    \State $b(v) \gets b$ for all $v \in N$
    \State $D \gets \{c \in C \setminus W \vert \sum_{v \in N_c} b(v) \geq 1\}$
    \While{$D \neq \emptyset$}
        \For{$c \in D$}
            \State $q(c) \gets \min \{q \geq 0 \vert \sum_{v \in N_c} \min\{b(v), q\} \geq 1\}$
        \EndFor
        \State $c^* \gets \arg\min_{c \in D} q(c)$
        \State $W \gets W \cup \{c^*\}$
        \For{$v \in N_{c^*}$}
            \State $b(v) \gets b(v) - \min \{b(v), q(c^*)\}$
        \EndFor
        \State $D \gets \{c \in C \setminus W \vert \sum_{v \in N_c} b(v) \geq 1\}$
    \EndWhile
    \State return $W$
\end{algorithmic}
\end{algorithm}

Based on the idea by \citet{casey2025justified}, we define $\alpha$-MES to be the algorithm as above, but voters obtain the minimum amount of budget $b$ such that at least $k$ candidates are bought, i.e., $b = \min\{b \in \R_+: \rvert \text{MES}(b)\lvert\geq k\}$.
If more than $k$ candidates can be bought, we stop after $k$ candidates have been bought.

\paragraph{Greedy Justified Candidate Rule (GJCR) \citep{BrPe23a}:}
This rule iteratively detects the largest current EJR+ violation and adds the responsible candidate to the committee.
Under GJCR, adding a candidate incurs a cost of $ 1$, which is uniformly split among all voters involved in the violation. 
For a formal definition, see Algorithm \ref{alg:gjcr_w_prices}. 
    
\begin{algorithm}[!ht]
    \caption{GJCR}
    \label{alg:gjcr_w_prices}
\begin{algorithmic}
    \State $W \gets \emptyset$
    \State $p(v) \gets 0$ for all $v \in N$
    \For{$\ell \in \{k, k-1, ..., 1\}$}
        \State $N(c) \gets \{v \in N_c : \vert A_v \cap W\vert < \ell \}$ for $c \notin W$
        \While{there is $c \notin W$ s.t. $\vert N(c)\vert \geq \ell \cdot \frac{n}{k}$}
            \State Add candidate $c$ maximizing $\vert N(c)\vert$ to $W$
            \State $p(v) \gets p(v) + \frac{1}{\vert N(c) \vert}$ for all $v \in N(c)$
        \EndWhile
    \EndFor
    \State return $W$
\end{algorithmic}
\end{algorithm}
Similar to Droop-GJCR defined in \citet{casey2025justified},
we define $\alpha$-GJCR to be the rule that looks for candidates $c \notin W$ such that $\vert N(c)\vert \geq \alpha\cdot\ell\cdot\frac{n}{k}$.
We minimize the $\alpha$-value until the algorithm chooses at least $k$ candidates.
We again stop after $k$ candidates have been chosen.

\clearpage

\section{Omitted Proofs}
In this section, we present the proofs missing
from the main body of our paper.

\subsection{General Properties}
\label{section:appendixGeneralProp}

\propImplications*
\begin{proof}
    Let $\alpha_1 \leq \alpha_2$ and $\Phi \in \{\text{JR, EJR, EJR+}\}$.
    First, it holds that $\alpha_1$-$\Phi$ $\implies \alpha_2$-$\Phi$ since increasing $\alpha$ can only decrease the number of cohesive groups, which therefore can only make the axiom weaker. 
    Put differently, every $\alpha_2$-cohesive group is also an $\alpha_1$-cohesive group.
    Therefore, $\alpha_1$-EJR+ $\implies$ $\alpha_2$-EJR+.
    Let $\alpha_2$-EJR+ be satisfied by committee $W$.
    Assume for contradiction that $\alpha_2$-EJR is violated with witness $(N', C', \ell)$ where $N' \subseteq N$ and $C'\subseteq C$.
    For all $v \in N'$, $\lvert A_v \cap W \rvert < \ell$ and due to cohesiveness of $N'$, there is a candidate $c \in C'\setminus W$.
    Since $\lvert N_c \rvert \geq \alpha_2 \cdot \ell \frac{n}{k}$, this candidate causes an $\alpha_2$-EJR+ violation; a contradiction.\\
    Next, we show that $\alpha_2$-EJR implies $\alpha_3$-JR.
    Again, $\alpha_2$-EJR implies $\alpha_3$-EJR, and since $\alpha_3$-JR is a special case of $\alpha_3$-EJR, the implication follows.
\end{proof}

\EJRJRincompatible*
\begin{proof} 
    First, we show that $\alpha_1$-EJR+ does not imply $\alpha_3$-JR.
    Select $a,b \in \N$ such that $\alpha_1 > \frac{a}{b} \geq \alpha_3$ and
    consider the following instance. 
    Let $\alpha_3 = \frac{a}{b}$ for $a, b \in \N$, $N = \{v_1, \dots v_b\} $ and $C = \{c_1, c_2\}$, $k = 1$, with $A_v = \{c_1\}$ for $v  \in \{v_1, \dots, v_a\}$. 
    Then, the committee $W = \{c_2\}$ satisfies $\alpha_1$-EJR+. 
    This follows from the fact that since
    \begin{equation*}
        \alpha_1 \cdot \frac{n}{k} > 
        \frac{a}{b} \cdot \frac{n}{k} = \frac{a}{b} \cdot b = a,
    \end{equation*}
    there does not exist any cohesive groups for $\alpha_3$ in the instance. 
    On the other hand, the voters $\{v_1, \dots, v_a\}$ together with candidate $c_1$ are witnessing an $\alpha_3$-JR violation.

    The same construction can also be used to show that $\alpha_1$-EJR+ does not imply $\alpha_2$-EJR and that $\alpha_2$-EJR does not imply $\alpha_3$-JR.
    \\

    Next, we show that $\alpha_2$-EJR does not imply $\alpha_1$-EJR+. 
    Let $\frac{a_1}{b_1} >\alpha_1 > \alpha_2  > \frac{a_2}{b_2}$ for some $a_1,a_2,b_1,b_2 \in \N$.
    We create the instance $N = [ (2\cdot b_1 \cdot b_2) \cdot ( a_1 \cdot b_2 )]$, with $C = \{c_1, \dots, c_{2 \cdot a_1 \cdot b_2}\} \cup \{c\}$ and $k = 2 \cdot a_1 \cdot b_2$ and $A_{v_i} = \{c, c_i\}$ for $i \in [2 \cdot a_1 \cdot  b_2]$. Thus, $\alpha_1 \cdot \frac{n}{k} < \frac{a_1}{b_1} \cdot \frac{(2\cdot b_1 \cdot b_2) \cdot (a_1 \cdot b_2 )}{2 \cdot a_1 \cdot b_2} =  a_1  \cdot b_2$ and 
    $\alpha_2 \cdot \frac{n}{k} > \frac{a_2}{b_2} \cdot \frac{(2\cdot b_1 \cdot b_2) \cdot (a_1 \cdot b_2 )}{2 \cdot a_1 \cdot b_2} =  a_2  \cdot b_1$.    

    We claim that the committee $W = \{c_1, \dots, c_{2 \cdot a_1 \cdot b_2}\}$ satisfies  $\alpha_2$-EJR but violates $\alpha_1$-EJR+. First, observe that the committee violates $\alpha_1$-EJR+.
    This follows, since $\lvert N_c \rvert =
    2 \cdot a_1\cdot b_2
    > 2 \cdot \alpha_1\cdot \frac{n}{k}$, no voter in $N$ approves $2$ candidates in $W$, and $c \not \in W$. On the other hand, we claim that there does not exist an $(\alpha_2, 2)$-cohesive group. Since $\lvert A_v \cap A_{v'}\rvert \leq 1$ for any $v \neq v' \in N$ an $(\alpha_2, 2)$-cohesive group can contain at most one voter. But as computed above, an $(\alpha_2, 2)$-cohesive group has to be of size at least $2\cdot \alpha_2 \cdot \frac{n}{k} > 2\cdot a_2 \cdot b_1 \geq 2$. On the other hand, every voter who approves of any candidate approves of at least one candidate in $W$. Thus, $W$ satisfies $\alpha_2$-EJR. 

    The same example can be used to prove that 
    $\alpha_3$-JR does not imply $\alpha_1$-EJR+.
    
    Lastly, we show that $\alpha_3$-JR does not imply $\alpha_2$-EJR.
    Consider the following instance $N = [4], C = \{c_1, c_2,c_3\} ,k=2$ with $A_v =  \{c_1,c_2\}$ for $v \in [4]$. Then, $W = \{c_1,c_3\}$ satisfies $\alpha_3$-JR for all $\alpha_3 > 0$ since every voter approves at least one candidate in $W$. On the other hand, $W$ does not satisfy $\alpha_2$-EJR for any $\alpha_2 \leq 1$ since $N$ together with $\{c_1,c_2\}$ witness an $(\alpha_2,2)$-EJR violation.
\end{proof}

\JrEjrApart*
\begin{proof}
    Consider a generalization of the example from \Cref{fig:example_jr_incompatible}.
    We have a set of voters $\{1, \dots, x\}$ approving candidates $c_1$ to $c_k$ and a set of voters $\{x+1, \dots, x+y\}$ approving candidate $c_{k+1}$.
    Without loss of generality, let $c_1$ to $c_{k-1}$ be in the committee $W$.
    If $c_{k+1}$ is also in the committee, then the voters approving candidate $c_k$ are ($\alpha_1$, k)-cohesive for an $\alpha_1$-value of $\frac{x}{x+y}$ and, therefore, witness an $\alpha_1$-EJR violation.
    If $c_k$ instead of $c_{k+1}$ is in the committee, then the voters approving $c_{k+1}$ are $(\alpha_2$, 1)-cohesive for an $\alpha_2$-value of $\frac{k\cdot y}{x+y}$, and witness an $\alpha_2$-EJR violation.
    So the optimal $\alpha$-value with respect to EJR is
    \begin{equation*}
        \min(\frac{x}{x+y}, \frac{k \cdot y}{x+y}).
    \end{equation*}
    By setting $x = k \cdot y$, both values simplify to $k/(k+1)$.
    On the other hand, the optimal committee with respect to $\alpha$-JR, $W = \{c_1, c_{k-1} \}$ results in an optimal $\alpha$-value of $0$ as we can cover every voter by $W$.
    Therefore, the additive gap between JR and EJR for the optimal $\alpha$-value is $k/(k+1)$.
\end{proof}

\numberAlphasEJR*
\begin{proof}
    Similar to \Cref{lemma:alphasToConsider}, it also holds for $\alpha$-EJR that there are only a polynomial number of $\alpha$-values that need to be considered. 
    
    For the proof, consider $\ell$-cohesive groups separately for $\ell \in \{1, \dots, k\}$.
    
    We claim that for $\ell$-cohesive groups it is sufficient to consider the $\alpha$-values $X_{\ell}=\{0,\frac{k}{\ell\cdot n},\frac{2k}{\ell\cdot n}, \dots, \frac{r_\ell\cdot k}{\ell\cdot n}\}$ with $r_\ell = \lceil \frac{n \cdot \ell}{k} \rceil-1$. 
    Since for each $\ell$ there are at most $n$ different $\alpha$-values in total, there are at most $k \cdot n$ many such values. Let $X = \cup_{\ell \in \{1, \dots, k\}}X_\ell$ be the set of all $\alpha$-values ordered from small to large.
    
    We now claim that $\alpha_{\text{EJR}}^* \in X$.
    So let $\alpha = \alpha_{\text{EJR}}^*$ be the optimal value.
    Thus, there exists some $\alpha$-EJR violation witnessed by some $(N',C', \ell)$.
    Let this again be the largest violation, meaning that the ratio of $\lvert N' \rvert$ and $\ell$ is maximal.
    Therefore, it holds that $\lvert N' \rvert \geq \alpha \cdot \ell \cdot  \frac{n}{k}$. 
    As for JR, it must hold that this inequality is tight, since otherwise $\alpha$ is clearly not optimal. 
    Thus, $\alpha = \lvert N' \rvert \cdot \frac{k}{\ell \cdot n} \in X_\ell \subseteq X$ which proves the claim.    
\end{proof}

In the main body of the paper, \Cref{example1} presents an instance $I$ that relies on tie-breaking to show that many rules can return suboptimal committees regarding $\alpha^*(I)$.
In the following, we provide an example that does not rely on tie-breaking. 
The argument works the same way as the one for \Cref{example1}, with the difference that the optimal value in this instance is $2 \cdot \frac{k}{n}$ achieved by the committee $W = D$.

\begin{example}  \label{example1Appendix}
    Consider the following family of instances. 
    Let $N = [(k+1)(k+2)]$. 
    Furthermore, let $C = B \cup D$, with $B = \{b_1, \dots, b_{k+1}\}$ such that $N_{b_i} = \{(i-1)\cdot (k+2)+1, \dots, i \cdot (k+2)\}$ and $D = \{d_1, \dots, d_k\}$ such that $N_{d_i} = \{j \in N \colon (j \mod (k+1))  = i\}$ for all $i \in \{1, \dots, k \}$.
    Then each candidate in $D$ is approved by $k+2$ voters and each candidate in $B$ is approved by $k+1$ voters. Since $N_{b_i} \cap N_{b_j} = \emptyset$ for $i \neq j$ most rules would select $k$ candidates from $B$.
\end{example}

\subsection{Complexity Results} \label{sec:omitted_complexity}
\theoremAlphaHardness*
\begin{proof}
    We first prove the statement for JR.
    For hardness, we give a reduction from vertex cover, which is one of Karp's original 21 NP-complete problems \citep{Karp72a}:
    Given a graph $G = (V, E)$, where $V = \{v_1, \dots v_n\}$ is the set of vertices and $E$ the set of edges, the question is whether there exists a vertex cover of size at most $r$. 
    
    Let $\alpha = \frac{a}{b}$, where $a \geq 2$. (If $a = 1$ just multiply $a,b$ with a factor of $2$.)
    
    We construct the following multiwinner voting instance:
    Let $N = V \cup X \cup Y \cup Z \cup F$ with 
    \begin{align*}
        X &= \{x_1, \dots, x_{a (\lvert V \rvert + a-2)}\}, \\
        Y &= \{y_1, \dots ,y_{ar}\},               \\
        Z &= \{z_1, \dots, z_{(b-a-1)(\lvert V \rvert + a-2) + r \cdot (b-a)}\},  \\
        F &= \{f_1, \dots, f_{a-2}\}.  \\
    \end{align*}
    The set of candidates is defined as $C = C_X \cup C_Y \cup C_E$ with 
    \begin{align*}
        C_X &= \{c^X_1, \dots c^X_{\lvert V \rvert + a-2} \}, \\
        C_Y &= \{c^v_{i}\colon i \in [r], v \in V \}, \\
        C_E &= \{c^E_e \colon e \in E \},  \hspace{3cm}
    \end{align*}
    and the committee size $k=\lvert V \rvert + a - 2 + r $. 
    Finally, the approval ballots for all voters are as follows:
    \begin{align*}
        A_v &= \{c^E_e \colon v \in e\} \cup \{c_j^v \colon j \in [r]\} \text{ for all } v \in V, \\
        A_{x_i} &= \{c^X_{ \lceil i / a \rceil}\} \text{ for all } x_i \in X,\\
        A_{y_i} &= \{c^v_{ \lceil i / a \rceil} \colon v\in V \} \text{ for all } y_i \in Y, \\
        A_f &= C_E \text{ for all } f \in F. \\
    \end{align*}

    In the following, we want to argue that there exists a committee that satisfies $\alpha$-JR if and only if there exists a vertex cover of size $r$. For this, we make the following observations about the instance:

    \noindent\textbf{1. Observation:} The voters in $Z$ do not approve any candidate and therefore cannot cause JR violations. \\
    \noindent\textbf{2. Observation:} In total, there are $a (\lvert V\rvert +a-2) + ar + (b-a-1)(\lvert V \rvert + a - 2) + r (b-a) + a-2 + \lvert V \rvert = b(\lvert V\rvert +a -2 + r)$ many voters. 
    Thus, in order to be an $\alpha$-cohesive group, the group has to contain at least $ \alpha \cdot \frac{n}{k} = \frac{a}{b} \cdot \frac{b(\lvert V\rvert +a  -2 + r)}{\lvert V\rvert +a-2 + r} = a$ many voters.

    \noindent\textbf{3. Observation:} Since for every candidate in $C_X$ there exist $a$ voters in $X$ that precisely approve that candidate, this implies that every committee $W$ that satisfies $\alpha$-JR has to contain all candidates $C_X$.

    \noindent\textbf{4. Observation:} Every committee $W$ satisfying $\alpha$-JR has to contain a candidate $c_i^v$ for every $i \in [r]$, otherwise there exists a group of $a$ voters in $Y$ that does not approve of any candidate in $W$. \\

    From the latter two observations, we get that every committee that satisfies $\alpha$-JR only contains candidates in $C_X \cup C_Y$ and that at least $\lvert V \rvert + a-2 + r \;(=k)$ many candidates need to be selected such that no $\alpha$-JR violation can occur. 
    
    Put differently, we need to select one candidate $c_i^v$ for every $i \in [r]$, as well as every candidate in $C_X$. 
    For every such committee, it holds that every voter in $X$ and $Y$ approves at least one candidate and, thus, cannot be part of an $\alpha$-JR violation. 
    
    Lastly, due to our approval ballot construction, every voter in $F$ does not approve of any candidate in such a committee, and at most $r$ voters in $V$ approve a candidate in such committees. \\
    
    With these observations, we can finally prove the equivalence: 
    First, we show that if there exists a vertex cover for the original instance, then there exists a committee that satisfies $\alpha$-JR.
    So let $\{v^{\text{VC}}_1, \dots v_r^{\text{VC}}\}$ be a vertex cover. 
    We create the following committee 
    \begin{equation*}
        W = C_X \cup \{c_{i}^{v_i^{\text{VC}}} \colon i \in [r] \}.
    \end{equation*}
    Observe that every voter of the vertex cover approves at least one candidate in $W$. 
    We claim that this committee satisfies $\alpha$-JR. 
    The only violations can be due to a group of voters in $V \cup F$ and a candidate $c \in C_E$. 
    Since every candidate in $C_E$ is approved by precisely $\alpha \cdot \frac{n}{k} = a$ many voters, a violation would imply that the whole support of a candidate is not represented in $W$. 
    Since the support of every candidate $c \in C_E$ contains two voters  $v_1, v_2 \in V$ such that $\{v_1,v_2\} \in E$, at least one of them approves a candidate in $W$, and thus we have found a committee that satisfies $\alpha$-JR. 

    For the reverse direction, assume that there exists a committee $W$ that satisfies $\alpha$-JR. 
    As argued above, it has to hold that $C_X \subseteq W \subseteq C_X \cup C_Y$. 
    Further, it holds that at most $r$ voters $V'\subseteq V$ approve any candidate in $W$ (at most one for every $c \in W \cap C_y$). 
    We claim that $V'$ forms a vertex cover. 
    Assume for the sake of contradiction that this does not hold. Then, there exists an edge $e = \{u,v\}$ that is not covered by any vertex in $V'$. 
    Thus, neither $u$ nor $v$ approves any candidate in $W$, but both approve the common candidate $c_{e}^{E}$. 
    Further, every voter in $F$ does not approve any candidate in $W$ and also approves $c_{e}^{E}$. 
    Together they form an $\alpha$-JR violation contradicting the assumption that $W$ satisfies $\alpha$-JR.   
    This concludes the proof for $\alpha$-JR. \\
    
    In the following, we argue that the same reduction can also be used to show hardness for $\alpha$-EJR and $\alpha$-EJR+. 
    To this end, we analyze the number of voters who approve of each candidate. 
    It holds that $\lvert N_c \rvert = a$ for every $c \in C_X$ since $a$ voters in $X$ approve the candidate. 
    For every candidate in $C_E$, all $a-2$ voters in $E$ and two voters in $V$ approve the candidate, so again, a total of $a$. 
    For every candidate in $C_y$, $a$ voters in $Y$, and $1$ voter in $V$ approve the candidate for a total of $a+1$ voters. Every 2-cohesive group must be of size at least 
    \begin{equation*}
        \alpha \cdot 2 \cdot \frac{n}{k} = \frac{a}{b}\cdot 2\cdot \frac{b(\lvert V\rvert +a -2 + r)}{\lvert V\rvert +a -2 + r} = 2a.
    \end{equation*} 
    Therefore, it has to hold that $2a \leq a+1$, which only holds for $a = 1$. 
    By assumption, $a > 1$. 
    Thus, the instance does only contain $1$-cohesive groups. Since for $1$-cohesive groups EJR+, EJR, and JR are equivalent, the reduction also works for EJR and EJR+.
\end{proof}

\ILPcorrectness*
\begin{proof}
    First, assume that the ILP has a solution.
    Let this be $x \in \{0, 1\}^m$ and $y \in \{0, 1\}^n$.
    Define $W = \{c \in C\colon x_c =1\}$ as our committee. 
    We claim that $W$ does not have any $\alpha$-JR violation. Assume for the sake of contradiction, there exists one with witness $(N',c')$. 
    This means all voters in $N'$ approve $c'$ and all voters in $N'$ do not approve any candidate in $W$. 
    Therefore, we can conclude from (1) that $x_c = 0$ for all $c \in A_v$ with $v \in N'$. 
    Therefore, using (2), we get that $y_v= 0$ for all $v \in N'$.  
    Finally, considering (3) for candidate $c'$, we get that $\sum_{v \colon\, c' \in A_v} (1-y_v) \leq \lvert N' \rvert$ and that $\lvert N'\rvert \geq \lceil \alpha \cdot \frac{n}{k} \rceil$, which contradicts that $x$ and $y$ being a solution for the ILP.
    
    For the other direction, assume there is a committee $W$ that, for a given $\alpha$, satisfies $\alpha$-JR. 
    We can define the following solution for the ILP. 
    We set $x_c = 1$ if and only if $c \in W$ and $y_v= 1 $ if and only if $W \cap A_v \neq \emptyset$ for all $c \in C, v \in N$. 
    We claim that this is a valid solution for the ILP. 
    First note that since $\lvert W\rvert = k$, condition (1) is satisfied. 
    Consider now condition (2).
    If $y_v = 0$, then the condition trivially holds, so assume that $y_v = 1$. 
    If $y_v = 1$, then by definition there exists a candidate $c \in A_v \cap W$ and therefore $x_c = 1$. 
    So, finally, assume for the sake of contradiction that (3) is violated for some $c$. 
    Then, there exists a group $N'$ such that $\lvert N'\rvert \geq \lceil \alpha \cdot \frac{n}{k} \rceil$ and for every member $v \in N'$, we have $y_v = 0$ and $c \in A_v$. 
    If $y_v=0$, this implies that $A_v \cap W = \emptyset$. 
    But, then $(N',c)$ are witnessing an $\alpha$-JR violation contradicting our assumption that $W$ satisfies $\alpha$-JR. This concludes the proof.
\end{proof}

\thmHardnessExtension*
\begin{proof}
    We give a reduction that applies to essentially all hardness results and counterexamples that already hold for some larger $\alpha$ (in this case, $\alpha = 1$).
    
    We know that for smaller $\alpha$, all conditions get even harder to satisfy. 
    To use the same arguments as in the original instance ($\alpha = 1$), we can rescale accordingly.
    
    Formally, let $\alpha = \frac{a}{b}$ with $a, b \in \mathbb{N}$ (possible by \Cref{lemma:alphasToConsider}). 
    Given any instance $(N, C,k)$ for the Hare quota, we transform the instance to the following instance $(N_{\alpha}, C,k)$ such that 
    $N_{\alpha} = \bigcup_{i = 1}^a N^{(i)} \cup Z$, where each $N^{(i)}$ is an exact copy of the voters in $N$ meaning that the voters approve precisely the same candidates in $C$, and $Z = \{z_1, \dots z_{(b-a) \cdot \lvert N \rvert}\}$ are voters that do not approve any candidate in $C$.
    
    We now claim that in this adapted instance, there exists an $\alpha$-EJR violation for a committee $W$ if and only if in the original instance there was an EJR violation.
    
    So, assume first that $W$ violates EJR. 
    Thus, there exists a witness set $(N',C', \ell)$ with  $\lvert N' \rvert \geq \ell \cdot \frac{n}{k}$, $\lvert C' \rvert \geq \ell$ such that for all voters in $v_i \in N'$ it holds that $\lvert A_{v_i}  \cap W \rvert < \ell$. 
    We now consider the set $N'_{\alpha} =\bigcup_{i}^a N'^{(i)}$ in the transformed instance. 
    First, since $\lvert N_\alpha\rvert  = n_{\alpha} =  b \cdot \lvert N \rvert$, it holds that 
    \begin{equation*}
        \lvert N'_{\alpha} \rvert = a \cdot \lvert N' \rvert \geq a \cdot \ell \cdot \frac{n}{k} = a \cdot \ell \cdot \frac{n_\alpha}{b \cdot k} = \alpha \cdot \ell \cdot \frac{n_\alpha}{k}.
    \end{equation*}
    By definition, every voter in $ N'_{\alpha}$ approves all candidates in $C'$ but at most $\ell-1$ candidates in $W$. Therefore, it holds that $(N'_{\alpha}, C, \ell)$ is a witness for an $\alpha$-EJR violation.
    
    For the reverse direction, let $(N'_{\alpha}, C', \ell)$ be an EJR violation. Thus, it holds that $\lvert N'_{\alpha} \rvert \geq \alpha \cdot  \ell \cdot \frac{n_\alpha}{k} = \ell \cdot \frac{n \cdot a \cdot b}{k \cdot b} = a \cdot \ell \cdot \frac{n}{k}$. 
    We now define the set $N' = \{ v \in N\colon \text{there exists an $i$ in 1 $\leq i \leq a$ with } v^{(i)}  \in N'_{\alpha} \}.$
    First, observe that $\lvert N'\rvert \geq \frac{\lvert N'_{\alpha} \rvert}{a} = \ell \cdot \frac{n}{k}$. 
    Furthermore, no voter in $N'$ approves more than $\ell-1$ candidates in $W$. Thus, $(N', C, \ell)$ witnesses an EJR violation for the original instance. This concludes the proof.   
\end{proof}

\ejrPlusAlphaFinding*
\begin{proof}
    The proof resembles the one from \Cref{proposition:JR_complexity_given_committee}. 
    We compute for every candidate $c \notin W$ and $\ell \in \{1, \dots, k \}$ the set of all voters that approve the candidate $c$ but approve at most $\ell-1$ candidates in $W$. Formally,
    \begin{equation*}
        N'_{c, \ell} := \{ i \in N_c\colon \lvert A_i \cap W \rvert < \ell \}.
    \end{equation*}
    Observe that this group of voters $N'_{c, \ell}$ forms an $\alpha$-EJR+ violation for $\alpha = \lvert N'_{c, \ell} \rvert \cdot \frac{k}{\ell n}$ since 
    $\lvert N'_{c, \ell} \rvert =  \alpha  \cdot \ell \cdot \frac{n}{k}$.
    
    Finally, let $\alpha = \max_{c \notin W, \ell \in \{1, \dots, k \}} \;  \lvert N'_{c, \ell} \rvert \cdot \frac{k}{\ell n}$.
    Therefore, $\alpha \leq \alpha_{\text{EJR+}}^W$. 
    
    It remains to show that $\alpha \geq \alpha_{\text{EJR+}}^W(I)$.
    By definition, there exists some $\alpha_{\text{EJR+}}^W(I)$-EJR+ violation $(N', c, \ell)$ for $N' \subseteq N$ and $c \in C$. 
    Thus, it holds that $\lvert N' \rvert \geq \alpha_{\text{EJR+}}^W(I) \cdot \ell \cdot \frac{n}{k}$. 
    Further, since the voters in $N'$ all approve $c$ but at most $\ell-1$ candidates in $W$, we get that $\lvert N' \rvert  \leq  \lvert \{i \in N_c\colon \lvert A_i \cap W \rvert < \ell \}\rvert \leq \lvert N'_{c, \ell} \rvert$. 
    Consequently,  $\lvert N'_{c, \ell} \rvert \geq \alpha_{\text{EJR+}}^W(I) \cdot \ell \cdot  \frac{n}{k}$ which implies that $\alpha_{\text{EJR+}}^W(I) \leq \lvert N'_{c, \ell} \rvert \cdot \frac{k}{\ell n} \leq \alpha$. 
    This concludes the proof.   
    The runtime is $O(mnk)$, since for every $c \notin W$ and $l \leq k$, we need to determine the number of voters who approve the candidate $ c$.
\end{proof}

\subsection{Restricted Domains} \label{section:appendix_restricted_domains}

\partydomainAlgorithm*
\label{appendix_proof_party_domain}
\begin{proof}
    We first show that $\alpha$-EJR+ and $\alpha$-EJR are equivalent for party-list profiles.
    The first direction follows from \Cref{prop:implications}, so we focus on the other one:
    Let $(N', c, \ell)$ be an EJR+ violation, i.e., $\lvert N' \rvert \geq \alpha \ell \frac{n}{k}$, $c \in (C \setminus W)\cap (\bigcap_{v\in N'}A_v)$ and for all $v \in N'$, we have $\lvert W \cap A_v \rvert < \ell$ and $\lvert\bigcap_{v \in N'}A_v \rvert \geq \ell$.
    This follows from the fact that all voters in $N'$ approve the same set of candidates; therefore, this also forms an $\alpha$-EJR violation.
    In the following, we will demonstrate the theorem for $\alpha$-EJR, which, as shown previously, also holds for $\alpha$-EJR+.
    
    Let $P$ be the set of parties.
    For $\alpha$-EJR, compute for every party $p \in P$, their support size denoted by $s_p$.
    Candidates are added to the committee $W$ sequentially in $k$ rounds.
    In each round $i$, add a candidate from party $p$ with
    \begin{equation*}
        \arg\max_{p \in P} \frac{s_p}{w_{p,i}+1}
    \end{equation*}
    where $w_{p,i}$ is the number of candidates approved by party $p$ in the committee $W$ at timestep $i$.
    The intuition behind the algorithm is the following:
    At every timestep, check which party causes the largest violation. 

    For the party $p$ not to get $(\alpha, w_{p,i}+1)$-cohesive, it has to hold that $\frac{n}{k} \cdot \alpha \cdot (w_{p,i} +1) > s_p$. Solving for $\alpha$, we get that $\alpha > \frac{s_p}{w_{p,i}+1} \cdot \frac{k}{n}$ where $\frac{k}{n}$ is a constant factor for all parties and can thus be ignored.

    We claim that this algorithm returns a committee with optimal $\alpha$-value with respect to EJR.
    Assume for the sake of contradiction that there does exist a committee $W'$ with a smaller $\alpha'$-value.
    If $W$ and $W'$ had the same number of candidates for each party, their $\alpha$-values would be equal.
    Thus, since $\lvert W \rvert = \lvert W'\rvert = k$, there must exist a party $p$ such that $w_p > w'_p$.
    Now, it holds  that $\alpha' \geq \frac{s_p}{w'_{p}+1} \cdot \frac{k}{n}$ due to party $p$.

    Next, consider the point where $p$ received its last candidate in the algorithm:
    At this specific timestep $i$ in the algorithm $\frac{s_p}{w_{p,i}+1} \geq \frac{s_t}{w_{t,i}+1}$ for all $t \in P$ and therefore it holds, that 
    \begin{equation*}
        \alpha \leq \frac{s_p}{w_{p,i}+1} \cdot \frac{k}{n} \leq \frac{s_p}{w'_{p}+ 1} \cdot \frac{k}{n} \leq \alpha'.
    \end{equation*}
    This contradicts our assumption that $\alpha' < \alpha$.

    In order to efficiently execute the algorithm we use a heap to store the parties.
    Then, in every one of the $k$ rounds, we need to compute the current value $\frac{s_p}{w_{p,i}+1}$ for every party $p \in P$.
    This can be implemented using a max-heap, yielding a runtime of  $O(\lvert P \rvert + k\log(\lvert P \rvert))$.

    For $\alpha$-JR we can find an optimal committee by making out the $k$ parties with the largest support and selecting a candidate from each of them. 
    In order to find $\alpha_{\text{JR}}^*(I)$ we consider the size of the $k+1$ largest party $P_{k+1}$ and compute $\alpha_{\text{JR}}^*(I) = \lvert P_{k+1}\rvert \cdot \frac{k}{n}$.
    Both can be done in linear time \citep{BLUM1973448}. 
\end{proof}

\viFindingEJRViolation* 
\begin{proof}
    Let $W$ be the committee, $N = \{v_1, \dots, v_n\}$ be the voters ordered according to the VI order $\sqsubset$, and $A$ be the approval profile.
    Furthermore, let $T(i,j)$ be the set of candidates that are commonly approved by voters $v_i$ and $v_j$. 
    Since each candidate is approved by a voter interval, we can conclude that every voter $v$ with $v_i \sqsubset v \sqsubset v_j$ approves every candidate in $T(i,j)$. \\
    For each $1\leq \ell \leq \lvert T(i,j) \rvert$, we compute the number of voters that approve at most $\ell-1$ many candidates in $W$. If this group is large enough to witness an $\alpha$-EJR violation, we know that $W$ does not satisfy $\alpha$-EJR since every voter in this group approves every candidate in $T(i,j)$.
    Finding a violation with this procedure, therefore, takes $O(nk)$ time for a given interval $(i,j)$.
    It is sufficient to consider every possible interval of which $O(n^2)$ many exist.
    Thus, the algorithm's runtime is in $O(n^3 k)$.

    Trivially, every violation we find with this procedure is an $\alpha$-EJR violation. In the following, we also prove that if there exists an $\alpha$-EJR violation $(N', C', \ell)$ then the algorithm above finds an $\alpha$-EJR violation $(N'', C'', \ell)$ with $N' \subseteq N''$.

    So let $N'$ be the set of voters witnessing the violation, and let $v_i$ be the first voter, as well as $v_j$ be the last voter in $N'$, according to the VI order. 
    When the algorithm considered the interval $(i,j)$, there were at least $\lvert N' \rvert \geq \frac{n}{k} \cdot \ell \cdot \alpha$ many voters approving at most $\ell-1$ many candidates in $W$. Further, each of them approves at least $\ell$ common candidates $C'$ with $C' \subseteq T(i,j)$. Thus, the algorithm finds a violation in the interval $(i,j)$. 
    This concludes the proof for EJR.
\end{proof}

\viDomainJR*
\begin{proof}
    To prove the theorem, we use \Cref{alg:vi_domain}. 
    Observe that the algorithm itself does not necessarily return a subset of size $k$. 
    By \Cref{lemma:VI_minimal_cardinality}, it holds that the algorithm returns the smallest subset $W$ that does not contain any $\alpha$-JR violation. Thus, we can execute \Cref{alg:vi_domain} and a committee satisfying $\alpha$-JR exists if and only if the algorithm returns a subset with at most $k$ candidates. 

    The runtime of the algorithm is $O(n^2m\log n)$ as in every one of the $n$ timesteps, we need to check if $\alpha$-JR is violated for a given committee $W$ and $\alpha$-value, which can be done in $O(nm)$ (shown in \Cref{proposition:JR_complexity_given_committee}).
    Using binary search over the distinct $\alpha$-values, of which at most $n$ exist due to \Cref{lemma:alphasToConsiderJR} requires an additional $\log n$ factor.
\end{proof}

\begin{lemma}\label{lemma:VI_minimal_cardinality}
    Let $W$ be the set of candidates returned by \Cref{alg:vi_domain}. 
    Then, there does not exist any set $W' \subseteq C$ with $\lvert W' \rvert < \lvert W\rvert$ that satisfies $\alpha$-JR.
\end{lemma}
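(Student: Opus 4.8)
The plan is to show directly that every committee $W'$ satisfying $\alpha$-JR has $|W'|\ge|W|$ (which is exactly the claim), via a greedy-stays-ahead argument on the right endpoints of candidates. For a candidate $c$ write $l_c$ and $r_c$ for its $\sqsubset$-leftmost and $\sqsubset$-rightmost supporter; recall that in the VI domain $c$ is approved exactly by the block $[l_c,r_c]$ of voters in the VI order $v_1,\dots,v_n$.

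First I would verify an invariant of \Cref{alg:vi_domain}: after voter $v_i$ is processed, the current committee admits no $\alpha$-JR violation among $\{v_1,\dots,v_i\}$. The only delicate point is the step where a violation is detected at $v_i$: by the invariant for $v_{i-1}$ this violation must involve $v_i$, so its witness lies in $A_{v_i}$, and adding the candidate $c_i\in A_{v_i}$ with $\sqsubset$-maximal $r_{c_i}$ destroys \emph{every} current violation, because for any $c\in A_{v_i}$ the supporters of $c$ not covered by $c_i$ all lie strictly left of $l_{c_i}$ and so, by the invariant for $v_{i-1}$ again, fall below the cohesiveness threshold. (This also shows that $W$ itself satisfies $\alpha$-JR.) Let $c_1,\dots,c_t$ be the candidates added, at steps $i_1<\dots<i_t$; then $|W|=t$, $l_{c_j}\sqsubseteq v_{i_j}\sqsubseteq r_{c_j}$, and since $v_{i_j}$ is uncovered by $c_{j-1}$ while $l_{c_{j-1}}\sqsubseteq v_{i_{j-1}}\sqsubset v_{i_j}$, we also get $r_{c_{j-1}}\sqsubset v_{i_j}$. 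Hence $r_{c_1}\sqsubset\dots\sqsubset r_{c_t}$, and no candidate of $\{c_1,\dots,c_{j-1}\}$ covers any voter $\sqsupset r_{c_{j-1}}$.

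Now fix an $\alpha$-JR committee $W'$ and, for each $j$, a witness candidate $c'_j$ of the violation found at step $i_j$ (so $v_{i_j}$ approves $c'_j$ and is uncovered by $\{c_1,\dots,c_{j-1}\}$). The whole block $(r_{c_{j-1}},v_{i_j}]$ is uncovered by $\{c_1,\dots,c_{j-1}\}$, so the supporters of $c'_j$ inside it form a nonempty contiguous set $U_j$ ending at $v_{i_j}$. The key step is: $W'$ covers some voter of $U_j$. Since the full step-$i_j$ violating group is $(\alpha,1)$-cohesive, $W'$ covers some voter $u$ of it; if $u\sqsubseteq r_{c_{j-1}}$ then, being uncovered by $c_{j-1}$, $u$ lies $\sqsubset l_{c_{j-1}}\sqsubseteq v_{i_{j-1}}$, so VI-contiguity along $u\sqsubset v_{i_{j-1}}\sqsubset v_{i_j}$ forces $c'_j\in A_{v_{i_{j-1}}}$; but then the greedy choice at step $i_{j-1}$ gives $r_{c_{j-1}}\sqsupseteq r_{c'_j}\sqsupseteq v_{i_j}$, contradicting $r_{c_{j-1}}\sqsubset v_{i_j}$. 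So pick $e_j\in W'$ covering some $u\in U_j$; then $r_{c_{j-1}}\sqsubset u\sqsubseteq r_{e_j}$, and also $r_{e_j}\sqsubseteq r_{c_j}$ (if $v_{i_j}$ is covered by $e_j$, by the maximality of $c_j$ in $A_{v_{i_j}}$; otherwise $r_{e_j}\sqsubset v_{i_j}\sqsubseteq r_{c_j}$).

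Finally I would induct on $j=1,\dots,t$, with $d_1,\dots,d_{|W'|}$ listing $W'$ in nondecreasing order of right endpoint, proving $|W'|\ge j$ and $r_{d_j}\sqsubseteq r_{c_j}$. The base case $j=1$ is the previous paragraph with $U_1$ the whole step-$i_1$ group. For the step, $r_{e_j}\sqsupset r_{c_{j-1}}\sqsupseteq r_{d_{j-1}}\sqsupseteq\dots\sqsupseteq r_{d_1}$, where $r_{d_{j-1}}\sqsubseteq r_{c_{j-1}}$ is the inductive hypothesis, so $e_j\notin\{d_1,\dots,d_{j-1}\}$; hence $|W'|\ge j$, and writing $e_j=d_\ell$ with $\ell\ge j$ gives $r_{d_j}\sqsubseteq r_{d_\ell}=r_{e_j}\sqsubseteq r_{c_j}$. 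Taking $j=t$ yields $|W'|\ge t=|W|$. I expect the main obstacle to be the key step above: ruling out that $W'$ resolves a forced violation through a far-left voter of the cohesive group instead of through a voter strictly right of $r_{c_{j-1}}$ — this is precisely where VI-contiguity and the algorithm's rightmost-endpoint tie-break are both essential.
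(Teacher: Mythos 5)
Your proof is correct, but it takes a genuinely different route from the paper's. The paper argues by an \emph{exchange} argument: it inductively maintains a cardinality-minimal $\alpha$-JR set $W^{(j)}$ containing the greedy prefix $C_j$, and at each step swaps out some candidate $c''$ of $W^{(j)}\setminus C_{j-1}$ approved by a voter of the new violation for the greedy's choice $c_j$, then verifies via VI-contiguity that feasibility is preserved. You instead give a \emph{greedy-stays-ahead} argument: you first establish that the greedy's candidates $c_1,\dots,c_t$ have strictly increasing right endpoints, then, for an arbitrary feasible $W'$, you locate for each $j$ a candidate $e_j\in W'$ covering a voter of the step-$i_j$ violating group that lies strictly to the right of $r_{c_{j-1}}$ (the key step, where you correctly rule out that $W'$ resolves the violation only through far-left voters by combining VI-contiguity with the rightmost-endpoint tie-break at step $i_{j-1}$), and conclude $r_{d_j}\sqsubseteq r_{c_j}$ for the $j$-th smallest right endpoint in $W'$, forcing $\lvert W'\rvert\ge t$. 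Both are standard templates for greedy optimality; yours has the advantage of comparing directly against an arbitrary feasible $W'$ without constructing a sequence of modified optima, and of making explicit the invariant that the algorithm's partial committee is violation-free on each prefix (which the paper asserts without proof), while the paper's exchange argument is somewhat shorter once that feasibility claim is granted. I checked the delicate points of your argument --- the invariant that a single added candidate kills all current violations, the strict increase $r_{c_{j-1}}\sqsubset v_{i_j}\sqsubseteq r_{c_j}$, the contradiction derivation when the covered voter $u$ satisfies $u\sqsubseteq r_{c_{j-1}}$, and the distinctness of $e_j$ from $d_1,\dots,d_{j-1}$ --- and they all go through.
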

\begin{proof}
    We prove this statement by induction over the voters, following the VI order.
    We say a subset of candidates is feasible if it satisfies $\alpha$-JR.
    Further, we say that a committee is \emph{optimal} if it is feasible and has minimal cardinality.
    Observe that the algorithm returns a feasible subset of candidates, so there always exists at least one optimal feasible subset.
    Let $C_j$ be the set of candidates selected up to time step $j$ by our algorithm. \\

    \noindent\textbf{Claim.} 
    For every $j\in\{1,\dots,n\}$, there exists an optimal subset of candidates $W^{(j)}$ such that every candidate that the greedy algorithm selected at some index $\leq j$ belongs to $W^{(j)}$, i.e., $C_j \subseteq W^{(j)}$. \\ 

    \noindent\textbf{Base:} $j=1$.
    If $C_j = \emptyset$, we are done, so assume that the set is non-empty.
    This means voter $v_1$ causes an $\alpha$-JR violation by herself, and every feasible subset of candidates must select a candidate that is approved by voter $v_1$.
    Observe that the candidate $c$ selected by \Cref{alg:vi_domain} is defined such that $r_{c'} \sqsubseteq r_c$ for all $c' \in A_{v_1}$. Since we are in the VI domain, this implies that $N_{c'} \subseteq N_{c}$. Therefore, if an optimal subset does not contain $c$, it has to contain such a $c'$, and we can swap them while upholding $\alpha$-JR. This proves the induction base. \\
    
    \noindent\textbf{Step:}
    Let the claim be satisfied for some $j-1 \geq 0$.
    If no candidate is selected in iteration $j$, we have $C_{j-1} = C_j$ and the claim follows from the induction hypothesis since $C_{j-1} \subseteq W^{(j-1)}$ and setting $W^{(j)} = W^{(j-1)}$. So assume a candidate $c_j$ is selected in iteration $j$. Further, if $c_j \in W^{(j-1)}$ then $C_{j} \subseteq W^{(j-1)}$ which proves the claim. So in the following assume that $c_j \not \in W^{(j-1)}$.

    Since there is an $\alpha$-JR violation caused by a set of voters $N''$ with $v \sqsubseteq v_{j}$ for every $v \in N''$ and containing voter $v_j$, $W^{(j-1)}\setminus C_{j-1}$ must contain a candidate $c''$ that is approved by some voter $v \in N''$ and thus $v \sqsubseteq v_j$.

    We now claim that removing $c''$ from $W^{(j-1)}$ and adding $c_j$ ends in a committee $W^{(j)}$ that also satisfies $\alpha$-JR. 

    So, assume for the sake of contradiction that this is not the case. Therefore, there exists a witness of a violation $(N',c)$.
    Further, since $W^{(j-1)}$ satisfies $\alpha$-JR there exists a voter $v'' \in N'$ that approves a candidate in $W^{(j-1)}$, but no candidate in $W^{(j)}$ which implies that $c'' \in A_{v''}$. 

    First, consider the case that $v_j \sqsubseteq v''$.
    Then, since $v \sqsubseteq v_j \sqsubseteq v''$ and $c'' \in A_{v''} \cap A_v$ it holds that $c'' \in A_{v_j}$. Therefore, by definition of $c_j$ we can conclude that $r_{c''} \sqsubseteq r_{c_j}$ which implies that every voter $v^*$ with $v_j \sqsubseteq v^*$ that approves $c''$ also approves $c_j$.
    Thus, voter $v''$ approves $c_j$ which contradicts $v'' \in N'$.

    It remains to consider the case that $v'' \sqsubset v_j$.
    Now, since by definition $C_{j-1}$ does not contain any $\alpha$-JR violation for $\{v_1, \dots, v_{j-1}\}$, and $C_{j-1} \subseteq W^{(j)}$ there has to exist a voter $v^* \in N'$ such that $v_j \sqsubseteq v^*$. But then $v'' \sqsubset v_j \sqsubseteq v^*$ which implies that $c \in A_{v_j}$. As in the first case, this implies that $r_{c} \sqsubseteq r_{c_j}$, and thus every voter $v^*$ with $v_j \sqsubseteq v^*$ who approves $c$ also approves $c_j$. But this is a contradiction to $v^* \in N'$.

    Consequently, $C_n$ is both minimal and feasible, corresponding to the output of our algorithm.
    Therefore, there cannot be any feasible committee $W'$ with $\lvert W' \rvert < \lvert W \rvert$.
\end{proof}

\ciFindingEJRViolation*
\begin{proof}
    We can use a similar proof as for the VI domain, but instead of considering every interval of voters, we consider every interval of candidates. 
    More precisely, denote by $N_{\ell}^{i,j} = \{v \in N: \{c_i, \dots, c_j\} \subseteq A_v, \lvert A_v\cap W \rvert < \ell \}$ for $1 \leq \ell \leq j-i+1$ the voters that approve every candidate in the interval but at most $\ell -1$ many candidates in $W$.
    If any such group $N_{\ell}^{i,j}$ is larger than $\alpha \ell \frac{n}{k}$ this group trivially witnesses an $\alpha$-EJR violation $(N_{\ell}^{i,j},\{c_i,\dots,c_j\} ,\ell)$. Since we can compute $N_{\ell}^{i,j}$ in $O(n)$ and we consider at most $k$ values for $\ell$ for at most $O(m^2)$ many pairs $i,j$ the runtime of the algorithm is at most $O(m^2kn)$.
    We claim that this procedure also always finds an $\alpha$-EJR violation whenever one exists.

    To prove this, let $(N', C', \ell)$ be any $\alpha$-EJR violation for the committee $W$. Further, let $c_i$ and $c_j$ be the first and last candidates in $C'$ according to the CI order. By definition, every voter in $N'$ approves at most $\ell-1$ many candidates in $W$ and $\lvert C' \rvert \geq \ell$. Thus, also $(N', \{c_i, \dots, c_j\}, \ell)$ is a witness of an EJR violation. When the algorithm considers the interval $i,j$, it therefore finds a violation $(N'',\{c_i,\dots, c_j\}, \ell)$ with $N' \subseteq N''$. This concludes the proof.    
   \end{proof}

\ciDomainJR*

\begin{proof} 
    We claim that \Cref{alg:ci_domain} returns a subset of size at most $k$ if and only if there exists a committee of size $k$ that satisfies $\alpha$-JR.
    First, observe that \Cref{alg:ci_domain} always outputs a subset of candidates that satisfies $\alpha$-JR by construction. 
    So if the output committee is of size at most $k$, there trivially also exists a committee satisfying $\alpha$-JR.
    It remains to show that if a committee of size $k$ exists, then the algorithm also finds one.

    We prove the statement inductively over the set of candidates $\{c_1, \dots, c_m\}$, going over them according to the linear order $\sqsubset$.
    $W$ is the committee returned by our algorithm.
    Let $C_j := W \cap \{c_1, \dots, c_j\}$ be the subset of candidates selected by the algorithm out of the first $j$ candidates according to the CI order. \\
    
    \noindent \textbf{Claim.}
    For a given $j$, there exists a minimal committee $W^{(j)}$ that satisfies $\alpha$-JR and $C_j \subseteq W^{(j)}$. \\
    
    \noindent \textbf{Base:} $j=1$.
    As discussed above, the subset of candidates returned by the algorithm satisfies $\alpha$-JR. Therefore, there exists a subset of candidates that satisfies $\alpha$-JR and thus also one of minimal cardinality. Now, if $c_1$ is selected by our algorithm, then $c_1$ must also be in every optimal committee since the set of candidates $\{c_2, \dots, c_m\}$ and therefore all possible subsets of size $k$ of this set would result in an $\alpha$-JR violation.
    Therefore, $C_1 \subseteq W^{(1)}$ for some optimal committee $ W^{(1)}$. \\

    \noindent \textbf{Step.} 
    Let the claim hold for $j-1 \geq 0$.
    We show that it also holds for $j$.
    If $c_j$ is not selected by the algorithm, then $C_j = C_{j-1}  \subseteq W^{(j-1)}$ by the induction hypothesis and we can conclude the claim by choosing $W^{(j)}= W^{(j-1)}$.
    Therefore, assume our algorithm selects $c_j$.
    Further, if $c_j \in W^{(j-1)}$ it holds that $C_j \subseteq W^{(j-1)}$ and again the claim follows.
    Since $c_j$ is selected by the algorithm, there is an $\alpha$-JR violation when considering $C_{j-1}\cup \{c_{j+1},\dots,c_m\}$ by a group of voters $N'$ approving a common candidate $c$ with $c \sqsubseteq c_j$ and no voter $v \in N'$ approves a candidate $c' \in \{c_{j+1}, \dots, c_m\}$.
    By assumption, $W^{(j-1)}$ does not have any violations, which means there is some candidate $c'' \in W^{(j-1)}$ 
    which is approved by at least one voter $v \in N'$ and $c'' \sqsubseteq c_j$. 
    Since we assume that $c_j \not \in W^{(j-1)}$ we can conclude that $c_j \neq c''$.
    Now, we denote as $W^{(j)}$ the committee after replacing $c''$ in $W^{(j-1)}$ with candidate $c_j$
    and claim that $W^{(j)}$ does not violate $\alpha$-JR.
    For this, assume for the sake of contradiction that there exists a cohesive group $(N'',c)$ that causes an $\alpha$-JR violation for $W^{(j)}$. This implies that no voter in $N''$ approves a candidate in $W^{(j)}$. First, consider the case $c_j \sqsubset  c$. Since $W^{(j-1)}$ satisfies  $\alpha$-JR  there exists a $v \in N''$ with $c^* \in A_v \cap W^{(j-1)}$ for some $c^*$. $c^* \in W^{(j)}$ would imply that $v$ approves a candidate in $W^{(j)}$. Thus, $c^* = c''$. But since $v$ approves $c$ and $c^*$ and $c^* \sqsubset c_j \sqsubset c$ this implies that $v$ approves $c_j$; a contradiction.

    Thus, in the following, we can assume that $c \sqsubset c_j$. 
    By definition of the algorithm, we know that the subset of candidates $C_{j-1} \cup \{c_{j}, \dots, c_m\}$ satisfies $\alpha$-JR.
    Therefore, there exists a voter $v \in N''$ that approves a candidate $c^* \in C_{j-1} \cup \{c_{j}, \dots, c_m\}$, but no candidate in $W^{(j)}$. Since $C_{j-1} \cup \{c_j\} \subseteq W^{(j)}$ it has to hold that $c_j \sqsubset c^*$. But then voter $v$ approves both $c$ and $c^*$, which together with $c \sqsubset c_j \sqsubset c^*$ implies that $c_j \in A_v$. 
    This contradicts our assumption that $v \in N''$ and thus concludes the proof.

    The runtime complexity of the proposed algorithm lies in $O(m^2n \log n)$.
    We need to check at most $n$ distinct $\alpha$-values according to \Cref{lemma:alphasToConsiderJR} and the algorithm iterates over all $m$ alternatives, checking in each step whether there is an $\alpha$-JR violation, which can be done in $O(mn)$.
    Using binary search for the $\alpha$-values results in an additional factor of $\log n$.
\end{proof}

\clearpage
\section{Further Empirical Insights} \label{section:appendix_experiments}
\subsection{Model Configuration}
For the experiments, we use the Python voting library by \citet{joss_abcvoting}. 
We generate approval ballots under two commonly used preference models: 
\emph{Impartial culture (IC)} and the \emph{Euclidean Threshold model (EUT)}. 
For both models, we vary the number of voters $n \in \{11, 12, 29, 59\}$, 
the number of candidates $m \in \{5, 9, 15\}$, and the committee size 
$k \in \{3, 5, 8, 11\}$. 
Under IC, each voter independently approves candidates with probability $p \in \{0.3, 0.5\}$. 
Under the Euclidean model, candidates and voters are embedded in the plane with voters sampled around the origin $(0,0)$ using a Gaussian distribution with $\sigma = 0.5$, and approvals are determined by a distance threshold $t \in \{1.7, 2.3\}$ (see \citep{joss_abcvoting} for more details). 
For each voting rule, we compute at most 5 committees. 
To compute each voting rule's performance, we compare the optimal $\alpha$-value with the mean $\alpha$-value of the returned committees.

\subsection{Further Results on $\alpha$-EJR}
We complement the results on $\alpha$-EJR presented in \Cref{section:experiments} with \Cref{table:ratio_EJR}. 
We separate the performance of the voting rules according to the parameters $k$, $m$, and $n$.
Every entry is averaged over 400 randomly generated instances. 
This comes from the fact that for every $(k, m, n)$-tuple, we have two sampling configurations for both sampling models. \\
PAV performs best among the considered methods.
MES and sequential Phragmén still perform reasonably well. 
For all combinations of parameters, the average additive distance between the $\alpha$-value achieved by the voting rules and the optimal $\alpha$-value is between 0.01 and 0.2, showing that these three rules are actually not as bad as the worst-case would suggest. 
This is especially surprising for sequential Phragmén, which does not satisfy 1-EJR. 
Unsurprisingly, CC performs poorly relative to the other rules, as its sole concern is maximizing the number of voters who approve a candidate. 
Nevertheless, CC often contains committees that also satisfy EJR, resulting in an overall moderate performance in approximating the optimal $\alpha$-value. \citep{bredereck2019experimental}. 
Therefore, we need to note that if one were not to consider the worst committees returned by CC, our analysis would have performed better.

\begin{table}[ht!]
\caption{Additive distance between $\alpha$-value achieved by the four different voting rules and the optimal achievable $\alpha$-value for $\alpha$-EJR.
This is averaged over 400 instances each and contains the IC and EUT model.}
\label{table:ratio_EJR}
\renewcommand{\arraystretch}{1.5}
\resizebox{\columnwidth}{!}{%
\begin{tabular}{rrr|p{1.3cm}p{1.3cm}p{1.3cm}p{1.3cm}}
\toprule
k & m & n & \multicolumn{1}{c}{\textbf{MES}} & \multicolumn{1}{c}{\textbf{seq-Phrag}} & \multicolumn{1}{c}{\textbf{CC}} & \multicolumn{1}{c}{\textbf{PAV}} \\
\midrule
3 & 5 & 11 & 0.010133 & 0.010473 & 0.055392 & 0.007784 \\
3 & 5 & 12 & 0.008507 & 0.007760 & 0.059616 & 0.008455 \\
3 & 5 & 29 & 0.017004 & 0.017047 & 0.034153 & 0.015237 \\
3 & 5 & 59 & 0.013083 & 0.013210 & 0.027110 & 0.012722 \\
3 & 9 & 11 & 0.019667 & 0.019884 & 0.101460 & 0.018074 \\
3 & 9 & 12 & 0.032172 & 0.033167 & 0.102177 & 0.028019 \\
3 & 9 & 29 & 0.028139 & 0.029149 & 0.049896 & 0.026893 \\
3 & 9 & 59 & 0.028475 & 0.028294 & 0.035160 & 0.027219 \\
3 & 15 & 11 & 0.043034 & 0.043642 & 0.159519 & 0.039949 \\
3 & 15 & 12 & 0.047342 & 0.046887 & 0.146009 & 0.045783 \\
3 & 15 & 29 & 0.035782 & 0.036655 & 0.069678 & 0.036067 \\
3 & 15 & 59 & 0.036084 & 0.036719 & 0.048757 & 0.035427 \\
5 & 9 & 11 & 0.038204 & 0.038030 & 0.180881 & 0.023542 \\
5 & 9 & 12 & 0.037530 & 0.034058 & 0.166367 & 0.027283 \\
5 & 9 & 29 & 0.041997 & 0.041559 & 0.081327 & 0.031221 \\
5 & 9 & 59 & 0.036419 & 0.037163 & 0.058316 & 0.034893 \\
5 & 15 & 11 & 0.044592 & 0.044854 & 0.275377 & 0.036544 \\
5 & 15 & 12 & 0.039661 & 0.039713 & 0.253620 & 0.034005 \\
5 & 15 & 29 & 0.055968 & 0.055875 & 0.118074 & 0.048463 \\
5 & 15 & 59 & 0.051038 & 0.050536 & 0.085691 & 0.045847 \\
8 & 9 & 11 & 0.011152 & 0.014509 & 0.158543 & 0.011192 \\
8 & 9 & 12 & 0.013481 & 0.016815 & 0.139576 & 0.009226 \\
8 & 9 & 29 & 0.017217 & 0.020142 & 0.073613 & 0.013810 \\
8 & 9 & 59 & 0.018103 & 0.018436 & 0.044746 & 0.014925 \\
8 & 15 & 11 & 0.046080 & 0.047793 & 0.321579 & 0.036526 \\
8 & 15 & 12 & 0.055936 & 0.054837 & 0.320223 & 0.036548 \\
8 & 15 & 29 & 0.050834 & 0.053788 & 0.170933 & 0.039221 \\
8 & 15 & 59 & 0.058181 & 0.059281 & 0.107438 & 0.053653 \\
11 & 15 & 11 & 0.039556 & 0.042946 & 0.298409 & 0.027728 \\
11 & 15 & 12 & 0.033503 & 0.038808 & 0.301895 & 0.027706 \\
11 & 15 & 29 & 0.054533 & 0.058232 & 0.175163 & 0.039851 \\
11 & 15 & 59 & 0.050336 & 0.053688 & 0.109126 & 0.043724 \\
\bottomrule
\end{tabular}}
\end{table}

\subsection{Plots for JR}
We present the same plots we considered for EJR for JR. 
The distribution of the $\alpha$-values is significantly closer towards $0$ than for EJR. 
The fact that the optimal value is $0$ in almost $80$ per cent of the instances in the IC model shows that there are many instances in which committees exist that satisfy every voter with at least one candidate.
\Cref{fig:ci_jr_cumulative_alpha} again shows the performance of the four voting rules we considered in this analysis.
CC is very close to the optimal $\alpha$-value for both the IC and EUT model. This is not surprising, since
CC maximizes the coverage, and therefore has an optimal $\alpha$-value of $0$ whenever $\alpha_{\text{JR}}^*$ is $0$. 

\begin{figure*}[t]  
    \centering
    \includegraphics[width=\textwidth]{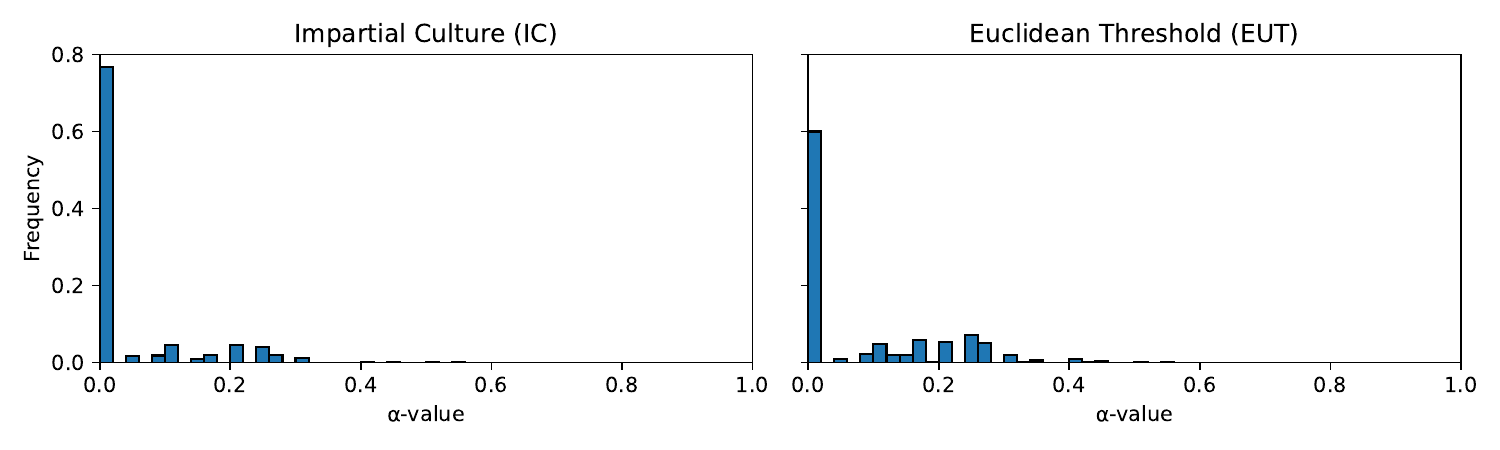}
    \caption{Distribution of optimal $\alpha$-values, $\alpha_{\text{JR}}^*$, under the Impartial Culture and the Euclidean Threshold model, based on 6400 generated instances each.}
    \label{fig:alpha_values_jr_overall}
\end{figure*}
\begin{figure*}[t]
    \centering
    \includegraphics[width=\textwidth]{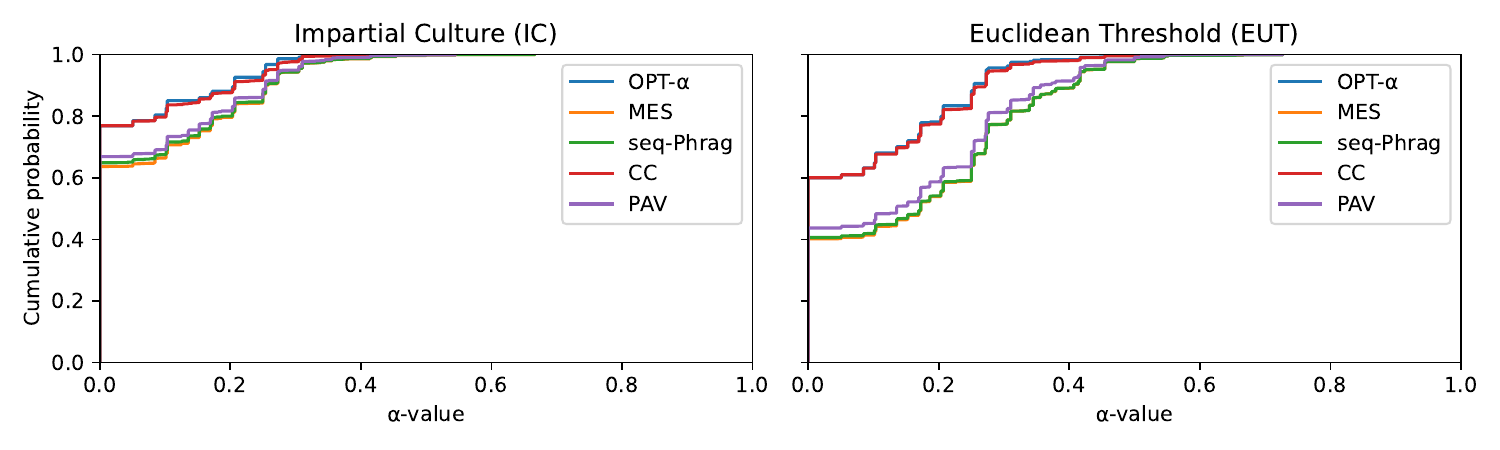}
    \caption{Cumulative distribution of the optimal $\alpha$-value, $\alpha_{\text{JR}}^*$,  and the $\alpha$-value achieved by MES, seq-Phragmén, CC, and PAV under the Impartial Culture and Euclidean Threshold model, based on 6400 samples each.}
    \label{fig:ci_jr_cumulative_alpha}
\end{figure*}

\end{document}